\newif\ifFull
\renewcommand{\subsection}[1]{\paragraph{\textbf{#1}.}}
\newcommand{\R}{\mathbb{R}}  % real numbers
\newif\ifFull
\title{Geometric Fingerprint Recognition via
Oriented Point-Set Pattern Matching}
\author{David Eppstein \thanks{University of California, Irvine}
\and
Michael T.~Goodrich\footnotemark[1]
\and
Jordan Jorgensen\footnotemark[1]
\and
Manuel R.~Torres \thanks{University of Illinois}}
\begin{document}

\maketitle

%%%%%%%%%%%%%%%%%%%%%%%%%%%%%%%%%%%%%%%%%%%%%%%%%%%%%%%%%%%%%%%%%%%%%%%%%%%%%%%%
\begin{abstract}
Motivated by the problem of fingerprint matching,
we present geometric approximation algorithms 
for matching a \emph{pattern} point
set against a \emph{background} point set, where the points 
have angular orientations in addition to their positions. 
\ifFull
We define such matching problems in terms 
of minimizing a directed Hausdorff distance between a pair of such 
oriented point sets, 
based on an underlying metric that combines 
positional distance and angular distance.
We present a family of fast approximation algorithms for such
oriented point-set pattern matching problems that are based on simple
pin-and-query and grid-refinement strategies.
Our algorithms achieve an approximation ratio of $1+\epsilon$, for any fixed
constant $\epsilon>0$.
\fi
\end{abstract}

%%%%%%%%%%%%%%%%%%%%%%%%%%%%%%%%%%%%%%%%%%%%%%%%%%%%%%%%%%%%%%%%%%%%%%%%%%%%%%%%

\section{Introduction}

Fingerprint recognition
typically involves a three-step process: (1) digitizing fingerprint images, 
(2) identifying \emph{minutiae},
which are
points where ridges begin, end, split, or join,
and (3) matching corresponding minutiae points between the two 
images.
An important consideration is that the minutiae are not pure geometric points: besides having geometric positions, defined by $(x,y)$ coordinates
in the respective images, each minutiae point also has 
an \emph{orientation} (the direction of the associated ridges), 
and these orientations
should be taken into consideration in the
comparison, e.g., see~\cite{maltoni2009handbook,jain1997identity,%
ratha2007automatic,Xu09,JEA20051672,jiang906252,%
tico2003fingerprint,qi2005fingerprint,kulkarni2006orientation}
and Figure~\ref{fig:fingerprint}.

\begin{figure}[htb]
\centering
\ifFull
\includegraphics[width=3.2in]{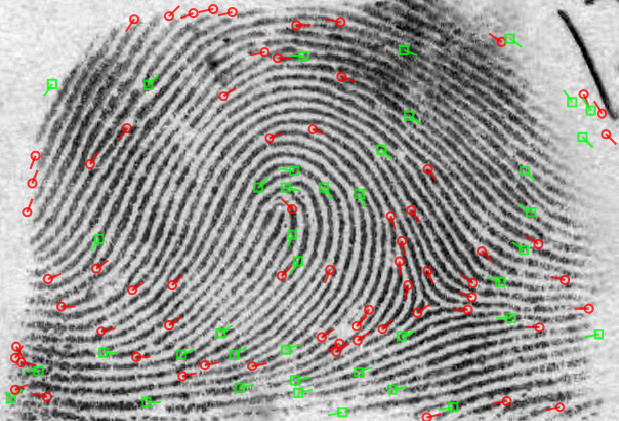}
\else
\includegraphics[width=2.5in]{figs/fingerprint.png}
\fi
\caption{Screenshot of the display of fingerprint minutiae in NIST's Fingerprint Minutiae Viewer (FpMV). 
\ifFull
Public domain U.S.~government image.
\fi}
\label{fig:fingerprint}
\end{figure}

In this paper, we consider computational geometry problems inspired
by this fingerprint matching problem.
The problems we consider are all instances of
point-set pattern matching problems, where we are given a 
``pattern'' set, $P$, of $m$ points in $\R^2$ and a 
``background'' set, $B$, of $n$ points in $\R^2$,
and we are asked to
find a transformation of $P$ that best aligns the points of $P$ with 
a subset of the points in $B$,
e.g., 
see~\cite{cardoze1998pattern,chew1997geometric,cho2008improved,%
gavrilov1999geometric,goodrich1999approximate}.
\ifFull
The point-set pattern matching problem 
has many applications,
including for image matching~\cite{he2003image,jain1997identity}, computer 
vision~\cite{mount1999efficient}, image compression~\cite{alzina20022d},
computational chemistry~\cite{finn1997rapid}, and 
touch input recognition~\cite{kristensson2005relaxing}, among others.
\fi

A natural choice of a distance measure to use in this case, 
between a transformed copy, $P'$, of 
the pattern, $P$, against the background, $B$,
is the \emph{directed Hausdorff distance},
defined as 
$
h(P', B) = \max_{p \in P'}\min_{q \in B}\rho(p, q)
$,
where 
$\rho$ is an underlying distance metric for points,
such as the Euclidean metric. 
In other words, the problem is to find a transformation of 
$P$ that minimizes the farthest any point in $P$ is from its nearest neighbor 
in $B$.
Rather than only considering the positions of the points in $P$ and
$B$, however, in this paper we consider instances
in which each point in $P$ and $B$ also has
an associated \emph{orientation} defined by an angle, as in
the fingerprint matching application.

It is important in such \emph{oriented point-set pattern matching} problems
to use an underlying distance that combines information about both the locations
and the orientations of the points, and to use this distance in finding a good
transformation.
Our goal is to design efficient
algorithms that can find a transformation that is a good match between
$P$ and $B$ taking both positions and orientations into
consideration.

\ifFull
There are several challenges to designing such 
algorithms, however. 
Adding orientation to point-set pattern matching further
complicates a computational geometry problem
whose exact solution
is already difficult, both in terms of geometric complexity (involving
computing intersections of non-linear curves and surfaces) and algorithmic
complexity (involving high running times in terms of $n$ and $m$).
Thus, we are interested in this paper in \emph{simple} approximation algorithms that
can achieve guaranteed high-quality performance in comparison to the theoretically optimal
(but harder to compute) solutions.
\fi

\subsection{Previous Work}
In the domain of fingerprint matching, past work tends to focus on matching
fingerprints heuristically or as pixelated images, 
taking into consideration both the positions
and orientation of the minutiae or
other features,
e.g., see~\cite{maltoni2009handbook,jain1997identity,%
ratha2007automatic,Xu09,JEA20051672,jiang906252,%
tico2003fingerprint,qi2005fingerprint,kulkarni2006orientation}.
We are not aware of past work on studying fingerprint matching
as a computational geometry problem, however.
\ifFull
That is, we are not familiar with previous work
on fingerprint matching as a geometric 
optimization problem of oriented point-set pattern matching with the goal of
minimizing a directed Hausdorff distance.
\fi

Geometric pattern matching for point sets without orientations, on the 
other hand, has been well studied from a computational geometry
viewpoint, 
e.g., see~\cite{alt1999discrete,chew1997geometric,gavrilov1999geometric,veltkamp2001shape}.
For such unoriented point sets,
existing algorithms can find an optimal solution
minimizing Hausdorff distance, 
but they generally have high polynomial running times. 
\ifFull
For example,
one such solution for planar point sets 
is due
to Chew {\it et al.}~\cite{chew1997geometric} and has a running time of 
$O(n^2m^3\log^2{nm})$. 
Their algorithm, which is based on an exact construction of a high-dimensional
configuration space, can be extended to oriented point-set pattern matching, but
it would lead to complex algorithms with even higher running times.
Thus, point-set
pattern matching is a natural domain where
we desire simple approximation algorithms.

\fi
Several existing algorithms give approximate solutions 
to geometric pattern matching
problems~\cite{cardoze1998pattern,cho2008improved,goodrich1999approximate,indyk1999geometric},
but we are not aware of previous approximation algorithms
for oriented point-set pattern matching.
\ifFull
In the context of unoriented points,
an approximation algorithm is usually characterized by its approximation ratio
to the optimal Hausdorff distance,
that is,
the largest factor by which the Hausdorff distance 
achieved by the algorithm is greater than optimal. In particular, 
if an algorithm 
claims an approximation ratio of $A$ and 
finds Hausdorff distance $H_{apr}$, then 
$H_{apr} \leq A\cdot H_{\textrm{opt}}$,
where $H_{\textrm{opt}}$ is the optimal Hausdorff distance. 
\fi
Goodrich {\it et al.}~\cite{goodrich1999approximate} present approximation 
algorithms for geometric pattern matching in multiple spaces under different 
types of motion, achieving
approximation ratios ranging from $2$ to $8+\epsilon$, for 
constant $\epsilon > 0$. 
\ifFull
Their algorithms are faster than 
the comparable
exact pattern-matching 
algorithms (e.g., $O(n^2m\log n)$, as compared to $O(n^2m^3\log^2{nm})$
for planar sets under Euclidean motion).
\fi
Cho and Mount~\cite{cho2008improved} show how to achieve improved
approximation ratios for such matching problems,
at the expense of making the analysis more complicated. 

Other algorithms give approximation ratios of $1+\epsilon$,
allowing the user to define the degree of certainty they want. 
Indyk {\it et al.}~\cite{indyk1999geometric} give a $(1+ \epsilon)$-approximation 
algorithm whose running time is defined in terms of both 
the number of points in the set as well as $\Delta$, which is defined as the the 
distance between the farthest and the closest pair of points. Cardoze and 
Schulman~\cite{cardoze1998pattern} offer a randomized $(1+\epsilon)$-approximation 
algorithm for $\R^d$ whose running time is also defined in terms of $\Delta$. 
These algorithms are fast when $\Delta$ is relatively small, which is true on 
average for many application areas, but these algorithms are much less efficient 
in domains where $\Delta$ is likely to be arbitrarily large.
\ifFull
Strictly speaking, these algorithms are 
not polynomial-time approximation schemes, 
because their running times depend on 
geometric properties of the input and not just on
its combinatorial complexity.
\fi

%%%%%%%%%%%%%%%%%%%%%%%%%%%%%%%%%%%%%%%%%%%%%%%%%%%%%%%%%%%%%%%%%%%%%%%%%%%%%%%%
\subsection{Our Results}
In this paper, we present a family of simple algorithms for approximate 
oriented point-set pattern matching problems, that is, computational
geometry problems motivated by fingerprint matching. 

Each of our algorithms uses as a subroutine a \emph{base
algorithm} that selects certain points of the pattern, $P$, and ``pins'' them into certain positions with respect to the background, $B$.
This choice determines a transformed copy $P'$ of the whole point set $P$.
We then compute the directed Hausdorff distance for this transform
by querying the nearest neighbor in $B$ for each point of $P'$. 
To find 
nearest neighbors for  a suitably-defined metric on oriented points that combines straight-line distance with  rotation amounts, we adapt 
balanced box decomposition (BBD) trees~\cite{arya1998optimal} to oriented point sets,
which may be of independent interest. The general idea of this adaptation is to insert two copies of each point such that, for any query point, if we find its nearest neighbor
using the $L_1$/$L_2$-norm, we will either find the nearest neighbor based on $\mu_1$/$\mu_2$ or we will find one of its copies.
The output of the base algorithm is the transformed copy $P'$ that minimizes this distance.
We refer to our base algorithms as \emph{pin-and-query} methods.

These base algorithms are all simple and effective,
but their approximation factors are larger than $2$, whereas we seek $(1+\epsilon)$-approximation schemes for any constant $\epsilon > 0$.
To achieve such results,
our approximation schemes call the base algorithm twice. The first call determines an approximate scale of the solution. Next, our schemes apply a \emph{grid-refinement} strategy that expands the set of
background points by convolving it with a fine 
grid at that scale, in order to provide more candidate motions. Finally, they call
the base algorithm a second time on the expanded input.
This allows us to leverage the speed and simplicity of the base algorithms,
gaining greater accuracy while losing only a constant factor in our running times.

The resulting approximation algorithms run in the same asymptotic time bound as 
the base algorithm (with some dependence on $\epsilon$ in the constants) 
and achieve approximations that are a $(1+\epsilon)$ factor 
close to optimal, for any constant $\epsilon>0$. 
For instance, one of our approximation schemes, designed in this way,
guarantees a worst case running time of $O(n^2m\log n)$ for rigid
motions defined by translations and rotations. 
Thus, our approach results in polynomial-time approximation schemes (PTASs),
where their  running times depend only on combinatorial parameters.
Specifically, we give
the runtimes and approximations ratios for our algorithms 
in Table~\ref{tab:approx}.

\begin{table}[htb]
\begin{center}
  \begin{tabular}{ | l | c | c |}
    \hline
    {\bf Algorithm} & {\bf Running Time} & {\bf Approx. Ratio} \\ \hline\hline
    T & $O(nm\log n)$ & $1+\epsilon$  \\ \hline
    TR & $O(n^2m\log n)$ & $1+\epsilon$  \\ \hline
    TRS & $O(n^2m\log n)$ & $1+\epsilon$  \\ \hline
  \end{tabular}
\end{center}    
 \caption{Running times and approximation ratios for our approximation 
algorithms. \label{tab:approx}}
\end{table}

The primary challenge in the 
design of our algorithms is to come up with methods that 
achieve an approximation factor of $1+\epsilon$, for any 
small constant $\epsilon>0$, without resulting in a running time that is dependent
on a geometric parameter like $\Delta$.
The main idea that we use to overcome this challenge is for our base algorithms
in some cases
to use two different pinning schemes, one for large diameters and one for small diameters,
We show that one of these pinning schemes always finds a good match, so choosing
the best transformation found by either of them
allows us to avoid a dependence on geometric parameters in our running times.
As mentioned above, all of our base algorithms are simple, as are our 
$(1+\epsilon)$-approximation algorithms.
Moreover, proving each of our algorithms
achieves a good approximation ratio is also
simple, involving no more than ``high school'' geometry. 
Still, for the sake of our presentation, we postpone some proofs
and simple cases to appendices.

\section{Formal Problem Definition}
\label{sec:problem}
Let us formally define the 
\emph{oriented point-set pattern matching} problem. 
We define an \emph{oriented point set} in $\R^2$ to be a finite subset
of the set $O$ of all oriented points, defined as
\[
O = \bigl\{(x,y,a) \mid x,y,a \in \R, a \in [0,2\pi)\bigr\}.
\]
We consider three sets of transformations on oriented point sets, 
corresponding to the usual translations, rotations, and scalings on $\R^2$. 
In particular, we define the set of \emph{translations},
$\mathcal{T}$, as the set of functions $T_v:O\to O$ of the form
\[
T_v(x, y, a) = (x + v_x, y + v_y, a),
\]
where $v = (v_x, v_y) \in \R^2$ is referred to as the \emph{translation vector}.

Let $R_{p, \theta}$ be a rotation in $\R^2$ where $p$ and $\theta$
are the center and angle of rotation, respectively. We extend the action of $R_{p, \theta}$
from unoriented points to oriented points by defining
\[
R_{p, \theta}(x,y,a) = \bigl(R_{p, \theta}(x,y),(a+\theta)\bmod 2\pi\bigr),
\]
and we let $\mathcal{R}$ denote the set of rotation transformations from $O$ to $O$ defined in this way.

Finally, we define the set of \emph{scaling} operations on an oriented
point set. Each such operation $S_{p,s}$ is determined by a point $p = (x_p, y_p, a_p)$
at the center of the scaling and by a scale factor, $s$.
If a point $q$ is Euclidean distance $d$ away from $p$ before scaling, 
the distance between $q$ and $p$ should become $sd$ after scaling. 
In particular, this determines $S_{p,s}:O\to O$ to be the function
\[
S_{p,s}(x,y,a) = \bigl(x_p + s(x-x_p), y_p + s(y-y_p), a\bigr).
\]
We let $\mathcal{S}$ denote the set of scaling functions defined in this way. 

As in the unoriented point-set pattern matching problems, we use a directed
Hausdorff distance to measure how well a transformed patten 
set of points, $P$, matches a background set of points, $B$.
That is, we use
\[
h(P, B) = \max_{p \in P}\min_{q \in B}\mu(p, q),
\]
where $\mu(p,q)$ is a distance metric for oriented points in $\R^2$.
Our approach works for various types of metrics, $\mu$, for pairs of points,
but, for the sake of concreteness, we focus on two specific
distance measures for elements of $O$,
which are based on the $L_1$-norm and $L_2$-norm, respectively.
In particular,
for $(x_1, y_1, a_1), (x_2, y_2, a_2) \in  O$, let 
\begin{align*}
&\mu_1((x_1, y_1, a_1), (x_2, y_2, a_2)) = \\ 
&|x_1 - x_2| + |y_1 - y_2| + \min(|a_1 - a_2|, 2\pi - |a_1 - a_2|),
\end{align*}
and let
\begin{align*}
&\mu_2((x_1, y_1, a_1), (x_2, y_2, a_2)) = \\
&\sqrt{(x_1 - x_2)^2 + (y_1 - y_2)^2 + \min(|a_1 - a_2|, 2\pi - |a_1 - a_2|)^2}.
\end{align*}
Intuitively, one can interpret these distance metrics to be analogous to the 
$L_1$-norm and $L_2$-norm in a cylindrical 3-dimensional space where 
the third dimension wraps back around to $0$ at $2\pi$.
Thus, for $i\in\{1,2\}$, and
$B, P \subseteq O$, we
use the following directed Hausdorff distance:
\[
h_i(P, B) = \max_{p \in P} \min_{b \in B} \mu_i(p,b).
\]
Therefore, 
for some subset $\mathcal{E}$ of $\mathcal{T} \cup \mathcal{R} \cup \mathcal{S}$,
the \emph{oriented point-set pattern matching} problem is
to find a composition $E$ of one or more functions in $\mathcal{E}$ that 
minimizes $h_i(E(P), B)$. 

%%%%%%%%%%%%%%%%%%%%%%%%%%%%%%%%%%%%%%%%%%%%%%%%%%%%%%%%%%%%%%%%%%%%%%%%%%%%%%%%
\section{Translations Only}
\label{translations}
In this section, we present our base algorithm and approximation
algorithm for approximately solving 
the oriented point-set pattern matching problem where we allow only translations. 
In this way, we present the basic template and data structures
that we will also use for the more interesting case of translations
and rotations ($\mathcal{T}\cup\mathcal{R}$).

Our methods for handling translations, rotations, and scaling is an
adaptation of our methods for 
$\mathcal{T}\cup\mathcal{R}$.

Given two subsets of $O$, $P$ and $B$, with $|P| = m$ and $|B| = n$, 
our goal here is to 
minimize $h_i(E(P), B)$ where $E$ is a transformation function in 
$\mathcal{T}$.

\subsection{Base Algorithm Under Translation Only}
\label{BaseAlgT}
Our base pin-and-query algorithm is as follows.

\begin{center}
\rule{\columnwidth}{2pt}
% \vspace*{-12pt}
\textbf{Algorithm} BaseTranslate($P,B$):
\begin{algorithmic}
\STATE Choose some $p \in P$ arbitrarily.
\FOR {every $b \in B$}
\STATE \emph{Pin step:} 
Apply the translation, $T_v\in {\cal T}$, that takes $p$ to $b$. 
\FOR {every $q \in T_v(P)$}
\STATE \emph{Query step:}
Find a nearest-neighbor of $q$ in $B$ using the $\mu_i$ metric, and update
a candidate Hausdorff distance for $T_v$ accordingly.
\ENDFOR
\STATE \textbf{return} the smallest candidate Hausdorff distance
found as the smallest distance, $h_i(T_v(P), B)$.
\ENDFOR 
\end{algorithmic}
\vspace*{-4pt}
\rule{\columnwidth}{2pt}
\end{center}

This algorithm uses a similar approach to an algorithm of 
Goodrich {\it et al.}~\cite{goodrich1999approximate},
but it is, of course, 
different in how it computes nearest neighbors, since we must use
an oriented distance metric rather than unoriented distance metric.
One additional difference is that rather than find an exact nearest neighbor,
as described above,
we instead find an \emph{approximate} nearest neighbor of each point, $q$,
since we are ultimately 
designing an approximation algorithm anyway.
This allows us to achieve a faster running time.

In particular,
in the query step of the algorithm, for any point $q\in T_v(P)$,
we find a neighbor, $b \in B$, whose distance 
to $q$ is at most a $(1+\epsilon)$-factor more than the distance from $q$ to
its true nearest neighbor.
To achieve this result, we adapt the
balanced box-decomposition (BBD) tree
of Arya {\it et al.}~\cite{arya1998optimal} to oriented point sets.
Specifically,
we insert into the BBD tree the following set of $3n$ points in $\R^3$:
\begin{align*}
\bigl\{b,b',b'' \mid &b \in B, \\
&b'=(x_p, y_b, a_b + 2\pi), \\
&b''=(x_b, y_b, a_b - 2\pi)\bigr\}.
\end{align*}
This takes $O(n\log n)$ preprocessing and it allows the BBD tree 
to respond to nearest neighbor 
queries with an approximation factor of $(1+\epsilon)$ while using the 
$L_1$-norm or $L_2$-norm as the distance metric, since the BBD is effective
as an approximate nearest-neighbor data structure
for these metrics. 
Indeed, this is the main reason why we are using these norms as our 
concrete examples of $\mu_i$ metrics.
Each query takes $O(\log n)$ time, 
so computing a candidate Hausdorff distance for a given transformation takes
$O(m\log n)$ time. Therefore, since we perform the pin step 
over $n$ translations, the algorithm overall takes time $O(nm\log n)$.
To analyze the correctness of this algorithm, we start with a simple
observation that if we translate a point using a vector whose 
$L_i$-norm is $d$, 
then the distance between the translated point and its old position is $d$.

\begin{lemma}
\label{lem:tran-dist}
Let $(x, y, a)$ be an element of $O$. 
Consider a transformation $T_v$ in $\mathcal{T}$ 
where $v$ is a translation vector. 
Let $T_v(x, y, a) = (x', y', a)$. 
If the $L_i$-norm of $v$ is $\|v\|_i = d$, then $\mu_i\bigl((x,y,a), (x',y',a)\bigr) = d$,
where $i \in \{1,2\}$.
\end{lemma}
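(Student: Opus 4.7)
The plan is to unpack the definition of $T_v$ and observe that the angular coordinate is preserved, so the $\min(|a_1-a_2|, 2\pi - |a_1-a_2|)$ term in $\mu_i$ vanishes and the metric collapses to the ordinary planar $L_i$ distance between the old and new positions. Once that is established, the conclusion follows from the fact that the planar displacement vector is exactly $v$, whose $L_i$-norm is $d$ by hypothesis.

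First, I would write $T_v(x,y,a) = (x+v_x, y+v_y, a)$ from the definition of $\mathcal{T}$, so that $(x',y',a) = (x+v_x, y+v_y, a)$; in particular, the third coordinate is unchanged. Substituting into the definition of $\mu_i$ gives $|a - a| = 0$ for the angular term, so
\[
\mu_1\bigl((x,y,a),(x',y',a)\bigr) = |x-x'|+|y-y'| = |v_x|+|v_y| = \|v\|_1,
\]
and similarly
\[
\mu_2\bigl((x,y,a),(x',y',a)\bigr) = \sqrt{(x-x')^2 + (y-y')^2} = \sqrt{v_x^2+v_y^2} = \|v\|_2.
\]
In both cases the expression equals $d$.

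There really is no hard part here; the only subtlety worth flagging is confirming that $\min(|a-a|, 2\pi - |a-a|) = \min(0, 2\pi) = 0$, so the wrap-around convention in the angular distance does not accidentally contribute $2\pi$. After that observation, the two displayed computations finish the lemma for $i \in \{1,2\}$.
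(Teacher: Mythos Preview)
Your proof is correct and follows essentially the same approach as the paper: unfold the definitions of $T_v$ and $\mu_i$, observe that the angular contribution vanishes since the orientation is unchanged, and reduce to the planar $L_i$-norm of $v$. The paper's argument is the same case split on $i\in\{1,2\}$ with the same direct computation.
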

\begin{proof}
First consider the case where $i = 1$. By definition of $\mu_1$ and $T_v$, 
\begin{align*}
&\mu_1\bigl((x,y, a), (x',y', a)\bigr) \\
&= |x-x'| + |y-y'| + \min(|a-a|, 2\pi - |a-a|) \\
&= |v_x| + |v_y| \\
&= d.
\end{align*}
Now consider the case where $i = 2$:
\begin{align*}
&\mu_2\bigl((x,y,a), (x',y',a)\bigr) \\
&= \sqrt{(x-x')^2 + (y-y')^2 + \min(a-a, 2\pi - |a-a|)^2}\\
& = \sqrt{v_x^2 + v_y^2}\\
& = d.
\end{align*}
Thus, for either case, the lemma holds.
\end{proof}

\begin{theorem}
\label{thm:BaseAlgT}
Let $h_{\textrm{opt}}$ be $h_i(E(P), B)$ where $E$ is the translation
in $\mathcal{T}$ that attains the minimum of $h_i$. 
The algorithm above runs in time 
$O(nm\log n)$ and produces an approximation to $h_{\textrm{opt}}$ that is at most $(2+\epsilon)h_{\textrm{opt}}$, for either $h_1$ and $h_2$, for any fixed constant
$\epsilon>0$.
\end{theorem}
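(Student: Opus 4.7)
The plan is to bound the running time by direct accounting, and then to bound the approximation ratio by a two-step argument: first, we show that one of the $n$ pinnings the algorithm tries is close (in translation-vector norm) to the unknown optimal translation $E$; second, we use Lemma~\ref{lem:tran-dist} together with the triangle inequality for $\mu_i$ to convert this closeness into a pointwise Hausdorff guarantee, and finally absorb the $(1+\epsilon)$ slack of the approximate nearest-neighbor queries.

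For the running time, the preprocessing of the BBD tree on the $3n$ copies of $B$ costs $O(n \log n)$. The outer loop runs $n$ pin iterations, each consisting of $m$ approximate nearest-neighbor queries, each of cost $O(\log n)$. So the total is $O(nm \log n)$, as claimed.

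For the approximation ratio, let $E = T_{v^*}$ be an optimal translation, so $h_i(E(P), B) = h_{\mathrm{opt}}$, and let $p \in P$ be the distinguished pattern point that the algorithm fixes. Let $b^* \in B$ be a nearest neighbor of $E(p)$ under $\mu_i$, so $\mu_i(E(p), b^*) \le h_{\mathrm{opt}}$. Because $p$ and $E(p)$ share the same orientation $a_p$, the positional components of $E(p)$ and $b^*$ differ by a vector whose $L_i$-norm is at most $\mu_i(E(p), b^*)$ (since the angular term in $\mu_i$ is nonnegative and drops out when we lower-bound by the positional part). Now when the algorithm's outer loop reaches $b = b^*$, it tries the translation $T_u$ with $u = (x_{b^*} - x_p, y_{b^*} - y_p)$, so $\|u - v^*\|_i$ equals exactly that positional difference, hence
\[
\|u - v^*\|_i \le \mu_i(E(p), b^*) \le h_{\mathrm{opt}}.
\]

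Now fix any $q \in P$. Under the algorithm's pin, $q$ lands at $T_u(q)$, and the optimum sends it to $E(q)$; these two images share the same orientation and differ only by the translation vector $u - v^*$. By Lemma~\ref{lem:tran-dist}, $\mu_i(T_u(q), E(q)) = \|u - v^*\|_i \le h_{\mathrm{opt}}$. Letting $b_q \in B$ be a true $\mu_i$-nearest neighbor of $E(q)$ in $B$, we have $\mu_i(E(q), b_q) \le h_{\mathrm{opt}}$, and the triangle inequality for $\mu_i$ (which is a metric on the cylindrical space) gives
\[
\min_{b \in B} \mu_i(T_u(q), b) \le \mu_i(T_u(q), b_q) \le 2 h_{\mathrm{opt}}.
\]
The BBD tree returns a neighbor within a $(1+\epsilon')$ factor of this, so the candidate Hausdorff distance the algorithm computes for $T_u$ is at most $2(1+\epsilon') h_{\mathrm{opt}}$; taking the best over all $n$ pins only helps. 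Choosing $\epsilon' = \epsilon/2$ inside the BBD tree yields the claimed $(2+\epsilon) h_{\mathrm{opt}}$ bound.

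The main subtlety, and the only place where the oriented-vs.\ unoriented distinction matters, is the step that turns $\mu_i(E(p), b^*) \le h_{\mathrm{opt}}$ into $\|u-v^*\|_i \le h_{\mathrm{opt}}$: the pinning $T_u$ does not correct the orientation mismatch between $a_p$ and $a_{b^*}$, so it is important to notice that when comparing $T_u(q)$ with $E(q)$ that orientation component cancels, leaving only the positional slack which is safely bounded by $\mu_i$. Everything else is bookkeeping with the triangle inequality and the BBD guarantee.
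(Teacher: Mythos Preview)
Your proof is correct and follows essentially the same approach as the paper's: both arguments pin $p$ to the background point nearest to its optimal image, observe that this translation differs from the optimum by a vector of $L_i$-norm at most $h_{\mathrm{opt}}$, invoke Lemma~\ref{lem:tran-dist} to bound the displacement of every pattern point, and then use the triangle inequality (implicit in the paper, explicit in your write-up) to conclude the $2h_{\mathrm{opt}}$ bound before absorbing the BBD slack. Your version is slightly more explicit about the vector bookkeeping ($u$ versus $v^*$) and about where orientation cancels, but the underlying argument is the same.
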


\begin{proof}
The $\epsilon$ term comes from the approximate nearest neighbor queries
using the BBD tree, and expanding $B$ to a set of size $3n$ by making a copy of each point in $B$ to have an angle that is $2\pi$ greater and 
less than its original value.
So it is sufficient to 
prove a $2$-approximation using exact nearest neighbor queries (while
building the BBD tree to return $(1+\epsilon/2)$-approximate nearest neighbors).
We prove this claim by a type of ``backwards'' analysis.
Let $E$ be a translation in $\mathcal{T}$ that attains the 
minimum of $h_i(E(P), B)$, and let $P'=E(P)$. 
Then every point $q \in P'$ is at most $h_{\textrm{opt}}$ from its closest 
background point in $B$. 
That is, for all $q$ in $P'$, there exists $b$ in $B$ such that 
$\mu_i(q,b) \le h_{\textrm{opt}}$. 
Let $b'\in B$ be the closest background point to the optimal 
position of $p$, where $p$ is the point we choose in the first step 
of the algorithm. Thus,
\[
\|(x_p, y_p) - (x_{b'}, y_{b'})\|_i \le \mu_i(p,b') \le h_{\textrm{opt}}.
\]
Apply the translation $T_v$ on $P'$ so that 
$p$ coincides with $b'$, which is equivalent to moving every point's position by 
$\|(x_p, y_p) - (x_{b'}, y_{b'})\|_i$. Hence, by 
Lemma~\ref{lem:tran-dist}, all points have moved at most $h_{\textrm{opt}}$.

As all points in the pattern started at most $h_{\textrm{opt}}$ away from a point in the background 
set and the translation $T_v$ moves all points at most $h_{\textrm{opt}}$, all points in 
$T_v(P')$ are at most $2h_{\textrm{opt}}$ from a point in the background set $B$. Since our algorithm checks 
$T_v$ as one of the translations in the second step of the algorithm, it will find a translation that is at least as good as $T_v$. Therefore, our algorithm 
guarantees an approximation of at most $2h_{\textrm{opt}}$,
for either $h_1$ and $h_2$.
\end{proof}

\subsection{A $(1+\epsilon)$-Approximation Algorithm Under Translations Only}
\label{EpsAlgT}
In this subsection, we utilize the algorithm from Section~\ref{BaseAlgT} to 
achieve a $(1+\epsilon)$-approximation when we only allow translations. 
Suppose, then, that we are
given two subsets of $O$, $P$ and $B$, with $|P| = m$ and $|B| = n$, 
and our goal is to 
minimize $h_i(E(P), B)$ over translations $E$ in $\mathcal{T}$.
Our algorithm is as follows:

\begin{enumerate}
\item Run the base algorithm, BaseTranslate($P,B$), from Section~\ref{BaseAlgT},
to obtain an approximation, $h_{apr} \leq A\cdot h_{\textrm{opt}}$.
\item For every $b \in B$, 
generate the point set 
\[
G_b = G\left(b, \frac{\epsilon\,h_{apr}}{A^2-A},
\left\lceil\frac{A^2-A}{\epsilon}\right\rceil\right)
\] 
for $h_1$ 
or \\
\[
G_b = G\left(b, \frac{\epsilon\sqrt{2}h_{apr}}{A^2-A}, 
\left\lceil\frac{A^2-A}{\epsilon\sqrt{2}}\right\rceil\right)
\]
for $h_2$.
Let $B'$ denote this expanded set of background points,
i.e., $B' = \bigcup_{b \in B}G_b$, and note that if $A$ is a constant,
then $|B'|$ is $O(n)$.
\item
Return the result from calling BaseTranslate($P,B'$), but restricting the
query step to finding nearest neighbors in $B$ rather than in $B'$.

\end{enumerate}

Intuitively, this algorithm
uses the base algorithm to give us a first approximation for the 
optimal solution. We then use this approximation to generate a 
larger set of points from which to derive transformations to test. 
We then use this point set again in the base algorithm 
when deciding which transformations to iterate over, 
while still using $B$ to compute nearest neighbors.
The first step of this algorithm runs in time $O(nm\log n)$, as we showed. 
The second step takes time proportional to the number of points which have to be generated, which is determined by $n$, our choice of the constant $\epsilon$, and the approximation ratio of our base algorithm $A$, which we proved is the constant $2+\epsilon$. 
The time needed to complete the second step is $O(n)$.
In the last step, we essentially call the base algorithm again on sets of size $m$
and $O(n)$, respectively; hence, this step requires
$O(nm\log n)$ time.

\begin{theorem}
\label{thm:EpsAlgT}
Let $h_{\textrm{opt}}$ be $h_i(E(P), B)$ where $E$ is the translation in 
$\mathcal{T}$ that attains the minimum of $h_i$, for $i \in \{1,2\}$. The algorithm above runs in time 
$O(nm\log n)$ and produces an approximation to $h_{\textrm{opt}}$ that is at most
$(1+\epsilon)h_{\textrm{opt}}$, for either $h_1$ and $h_2$.
\end{theorem}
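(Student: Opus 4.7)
The running time is essentially argued in the discussion preceding the theorem --- each of the three steps takes $O(nm\log n)$ or $O(n)$ --- so my plan is to focus on establishing the $(1+\epsilon)$ approximation ratio by a ``backwards analysis'' that closely parallels the proof of Theorem~\ref{thm:BaseAlgT}, but exploiting the refinement grid $B'$ in place of the raw set $B$. The key conceptual point is that the second call to BaseTranslate gets to pin the chosen pattern point not merely onto some $b\in B$, but onto a fine mesh of candidate positions near each $b$, so instead of absorbing a factor of $2$ from an arbitrary offset of size $h_{\textrm{opt}}$ we will absorb only a factor of $1+O(\epsilon)$.

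Concretely, let $v^*\in\mathcal{T}$ be a translation realizing $h_{\textrm{opt}}$, let $p$ be the pattern point chosen inside BaseTranslate, and set $p^* = T_{v^*}(p)$. I would pick $b^*\in B$ with $\mu_i(p^*,b^*)\le h_{\textrm{opt}}$; since the positional part of $\mu_i$ is exactly the $L_i$-distance, the $L_i$ positional distance from $p^*$ to $b^*$ is also at most $h_{\textrm{opt}}\le h_{apr}$. The crux of the argument is to show that $G_{b^*}$ contains a grid point $b'$ whose position lies within $L_i$-distance $\epsilon h_{apr}/(A^2-A)$ of $p^*$. This reduces to two elementary checks on the grid parameters: (i) the half-extent $k\cdot\delta$ exceeds $h_{apr}\ge h_{\textrm{opt}}$, so the position of $p^*$ falls inside the region covered by $G_{b^*}$; and (ii) the spacing $\delta$ is fine enough that the worst-case $L_1$ (resp.\ $L_2$) offset to a nearest grid vertex is $\epsilon h_{apr}/(A^2-A)$ --- a routine verification that explains the two spacings prescribed in the algorithm for $h_1$ and $h_2$ separately.

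Given such a $b'$, the translation $v'$ pinning $p$ onto $b'$ is one of the transformations examined by the second call to BaseTranslate, so it suffices to bound $h_i(T_{v'}(P),B)$. Writing $v' - v^* = b' - p^*$ and invoking Lemma~\ref{lem:tran-dist}, every $q\in T_{v^*}(P)$ is moved by exactly $\|v'-v^*\|_i$ in the $\mu_i$ metric, and the triangle inequality for $\mu_i$ gives
\[
h_i(T_{v'}(P),B)\ \le\ h_{\textrm{opt}} + \tfrac{\epsilon\, h_{apr}}{A^2-A}\ \le\ h_{\textrm{opt}}\left(1 + \tfrac{\epsilon}{A-1}\right)\ \le\ (1+\epsilon)h_{\textrm{opt}},
\]
using $h_{apr}\le A\cdot h_{\textrm{opt}}$ and $A = 2+\epsilon\ge 2$ from Theorem~\ref{thm:BaseAlgT}. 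The main obstacle is the careful bookkeeping of the several $\epsilon$-like parameters in the pipeline --- the base algorithm's ratio $A=2+\epsilon_{\mathrm{base}}$, the grid-spacing constant, and the $(1+\epsilon_{\mathrm{NN}})$ slack introduced by approximate BBD-tree queries --- and verifying that each can be set to a small constant fraction of the target $\epsilon$ so the final ratio is genuinely $(1+\epsilon)$ without inflating the $O(nm\log n)$ running time.
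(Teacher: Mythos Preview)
Your proposal is correct and follows essentially the same backwards-analysis strategy as the paper: run the base algorithm to get $h_{apr}\le A\,h_{\textrm{opt}}$, observe that the optimally placed pin point $p^*$ lies inside the grid $G_{b^*}$ and hence within $\epsilon h_{apr}/(A^2-A)$ of some grid vertex (the paper invokes Lemma~\ref{lem:cube} for this, which you verify inline), and conclude via Lemma~\ref{lem:tran-dist} and the triangle inequality. The only cosmetic difference is that the paper phrases the displacement bound using a generic $(A-1)$ blowup factor (a template it reuses verbatim in the later rotation and scaling proofs), whereas you use the tighter factor~$1$ that Lemma~\ref{lem:tran-dist} actually gives for pure translations; both yield $(1+\epsilon)h_{\textrm{opt}}$.
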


\begin{proof}
Let $E$ be the translation in $\mathcal{T}$ 
that attains the minimum of $h_i(E(P), B)$. 
Let $P'$ be $E(P)$. Then every point $q \in P'$ is at most $h_{\textrm{opt}}$ from the closest 
background point in $B$.
By running the base algorithm the first time, we find $h_{apr}\leq A\cdot h_{\textrm{opt}}$, 
where $A$ is the approximation ratio of the base algorithm.
Now consider the point, $b'\in B$, 
that is the closest background to some pattern point $p \in P$. 
The square which encompasses $G_{b'}$ has a side length of $2h_{apr}$. 
This guarantees that $p$, which is at most $h_{\textrm{opt}}$ away from $b'$, lies within this square. 
As we saw from Lemma~\ref{lem:cube}, this means that $p$ is at most $\frac{\epsilon h_{apr}}{A^2-A}$ away from its nearest neighbor in $G_{b'}$. 
Thus, if a transformation defined by the nearest point in $B$ would move our pattern points at most $(A-1)h_{\textrm{opt}}$ from their optimal position, then using the nearest point in $G_{b'}$ to define our transformation will move our points at most $(A-1)\frac{\epsilon h_{apr}}{A^2-A} = \frac{\epsilon h_{apr}}{A} \leq \epsilon h_{\textrm{opt}}$.
Therefore, 
our algorithm gives a solution that is at most $(1+\epsilon)h_{\textrm{opt}}$
from optimal. 
\end{proof}

\section{Translations and Rotations}

In this section, we present our base algorithm and approximation
algorithm for approximately solving 
the oriented point-set pattern matching problem where we allow translations
and rotations. 
Given two subsets of $O$, $P$ and $B$, with $|P| = m$ and $|B| = n$, 
our goal here is to 
minimize $h_i(E(P), B)$ where $E$ is a composition of functions in 
$\mathcal{T}\cup \mathcal{R}$.
In the case of translations and rotations, we  
actually give two sets of algorithms---one set that works for point sets
with large diameter and one that works for point sets with small diameter.
Deciding which of these to use is based on a simple calculation (which we postpone
to the analysis below), which amounts to a normalization decision to determine how
much influence orientations have on matches versus coordinates.

\subsection{Base Algorithm Under Translation and Rotation with Large Diameter}
\label{BaseAlgTRlargeDiam}
In this subsection, we present an algorithm for solving 
the approximate oriented point-set pattern matching problem where 
we allow translations and rotations. 
This algorithm provides a good approximation ratio 
when the diameter of our pattern set is large.
Given two subsets $P$ and $B$ of $O$, with $|P| = m$ and $|B| = n$, we wish to minimize 
$h_i(E(P), B)$ over all compositions $E$ of one or more functions in 
$\mathcal{T} \cup \mathcal{R}$. 
Our algorithm is as follows
(see Figure~\ref{fig:BaseAlgTRlargeDiam}).

\begin{center}
\rule{\columnwidth}{2pt}

\textbf{Algorithm} BaseTranslateRotateLarge($P,B$):
\begin{algorithmic}
\STATE 
Find $p$ and $q$ 
in $P$ having the maximum value of $\|(x_p, y_p) - (x_q, y_q)\|_2$.
\FOR {every pair of points $b,b' \in B$}
\STATE \emph{Pin step:} 
Apply the translation, $T_v\in {\cal T}$, that takes $p$ to $b$,
and apply the rotation,
$R_{p,\theta}$, that makes $p$, $b'$, and $q$ collinear.
\STATE
Let $P'$ denote the transformed pattern set, $P$.
\FOR {every $q \in P'$}
\STATE \emph{Query step:}
Find a nearest-neighbor of $q$ in $B$ using the $\mu_i$ metric, and update
a candidate Hausdorff distance accordingly.
\ENDFOR
\STATE \textbf{return} the smallest candidate Hausdorff distance
found as the smallest distance, $h_i(R_{p,\theta}(T_v(P)), B)$.
\ENDFOR 
\end{algorithmic}
\vspace*{-4pt}
\rule{\columnwidth}{2pt}
\end{center}

\begin{figure}[hbt]
\centering
\includegraphics[width=.6\linewidth]{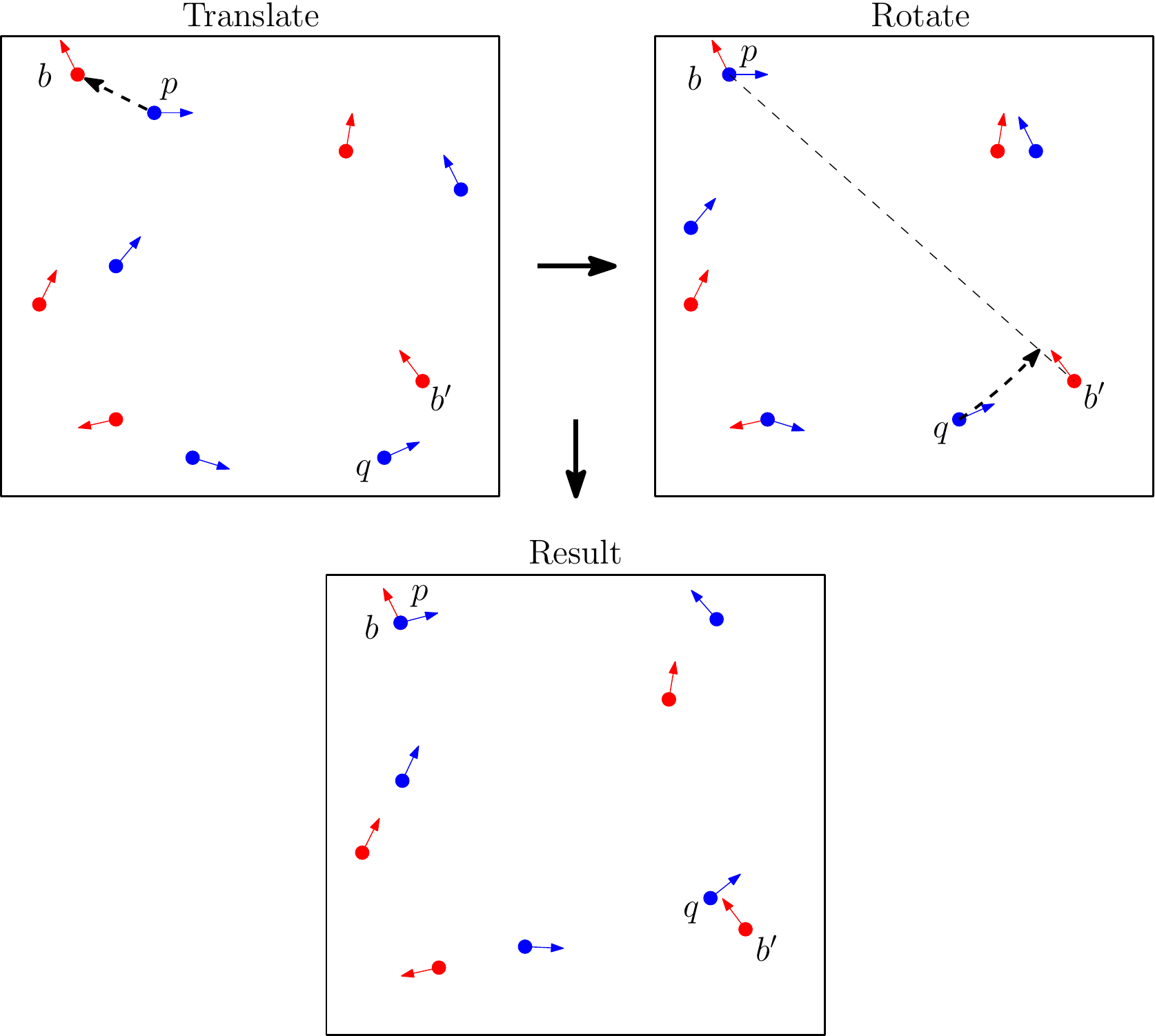}
\vspace*{-4pt}
\caption{Illustration of the translation and rotation steps of the base
approximation algorithm for translation and rotation in $O$ when diameter is
large.}
\label{fig:BaseAlgTRlargeDiam}
\end{figure}

\ifFull
This algorithm is somewhat similar to an algorithm by 
Goodrich et al.~\cite{goodrich1999approximate} for unoriented points,
but differs in how it computes distances and in the way
it computes nearest neighbors using BBD trees.
\fi

The points $p$ and 
$q$ can be found in $O(m \log m)$ time~\cite{preparatacomputational}. 
The pin step iterates over $O(n^2)$ translations and rotations, respectively,
and, for each one of these transformations,
we perform $m$ BBD queries, each of which takes $O(\log n)$ time.
Therefore, our total running time is 
$O(n^2m\log n)$.
Our analysis for this algorithm's approximation factor uses the following
simple lemma.

\begin{lemma}
\label{lem:rot-dist}
Let $P$ be a finite subset of $O$. Consider the rotation $R_{c, \theta}$ in $\mathcal{R}$. Let $q = (x_q, y_q, a_q)$ be the 
element in $P$ such that $\|(x_q,y_q)-(x_c, y_c)\|_2 = D$ is maximized.
For any $p = (x_p, y_p, a_p) \in P$, denote $R_{c, \theta}(x_p, y_p, a_p)$ as $p' = (x_{p'}, y_{p'}, a_{p'})$. Let $i \in \{1,2\}$. Then for 
all $p \in P$, $\mu_i(p, p') \leq \|(x_q, y_q) - (x_{q'}, y_{q'})\|_i + \pi
\|(x_q, y_q) - (x_{q'}, y_{q'})\|_2/(2D)$.
\end{lemma}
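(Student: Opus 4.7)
The plan is to split $\mu_i(p,p')$ into its positional part $\|(x_p,y_p)-(x_{p'},y_{p'})\|_i$ and its angular part, bound each in terms of the corresponding quantities for $q$, and then reassemble. Write $r_p = \|(x_p,y_p)-(x_c,y_c)\|_2$, so that $r_p \le D$ by the defining property of $q$, and let $\theta^* = \min(|\theta|, 2\pi-|\theta|) \in [0,\pi]$ be the effective rotation angle. Since $R_{c,\theta}$ shifts the third coordinate of every oriented point by the same amount $\theta$ modulo $2\pi$, the angular contribution to $\mu_i(p,p')$ is exactly $\theta^*$, independent of which point we picked. The only point-dependent quantity is therefore the planar displacement, which lies along a chord of the circle of radius $r_p$ around $c$.

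First I would handle the angular contribution using Jordan's inequality $\sin x \ge 2x/\pi$ on $[0,\pi/2]$. Combined with the chord-length identity $\|(x_q,y_q)-(x_{q'},y_{q'})\|_2 = 2D\sin(\theta^*/2)$, this gives
\[
\theta^* \;\le\; \pi\sin(\theta^*/2) \;=\; \frac{\pi\,\|(x_q,y_q)-(x_{q'},y_{q'})\|_2}{2D},
\]
which is precisely the second summand on the right-hand side of the claim.

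Next I would handle the positional contribution. The same chord-length identity applied to $p$ gives $\|(x_p,y_p)-(x_{p'},y_{p'})\|_2 = 2r_p\sin(\theta^*/2)$; since $r_p \le D$, this is at most $2D\sin(\theta^*/2) = \|(x_q,y_q)-(x_{q'},y_{q'})\|_2$, which settles the $i=2$ case immediately. For $i=1$, I would expand the displacement $p-p'$ using the sum-to-product identities to expose both its length $2r_p\sin(\theta^*/2)$ and its direction, then compare with the analogous expansion of $q-q'$. The ratio $r_p/D \le 1$ controls the lengths of the two chords, and a short trigonometric comparison of $|\sin|+|\cos|$ for the two chord directions handles the anisotropy of the $L_1$ norm to yield the desired bound on $\|(x_p,y_p)-(x_{p'},y_{p'})\|_1$.

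To finish, I would combine the positional and angular bounds. For $i=1$, $\mu_1(p,p')$ is literally the sum of its positional and angular parts, so the two inequalities add directly. For $i=2$, $\mu_2(p,p') = \sqrt{u^2+v^2}$ with $u$ the positional displacement and $v=\theta^*$, and applying the elementary inequality $\sqrt{u^2+v^2} \le u+v$ together with the two bounds gives the claim. I expect the main obstacle to be the $L_1$ positional step: rotations are not isometries of the $L_1$ metric, so one cannot simply argue ``smaller radius gives shorter chord'' as in the $L_2$ case; some direction-dependent trigonometry is needed to keep the constants clean.
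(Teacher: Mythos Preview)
Your decomposition into positional and angular parts, together with Jordan's inequality for the angular term and the bound $\sqrt{u^2+v^2}\le u+v$ for the $\mu_2$ recombination, is exactly the paper's argument. The paper, however, simply asserts that ``$q$ has moved at least as far as any other point because it is the farthest from the center of rotation'' and proceeds as if this held in the $L_1$ norm as well as in $L_2$; it does not attempt the direction-dependent comparison you propose.

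You are right to flag the $L_1$ positional step as the obstacle, but the ``short trigonometric comparison'' you sketch cannot succeed, because the inequality $\|(x_p,y_p)-(x_{p'},y_{p'})\|_1 \le \|(x_q,y_q)-(x_{q'},y_{q'})\|_1$ is false in general. Writing each chord displacement as $2r\sin(\theta^*/2)\,(|\sin\beta|+|\cos\beta|)$ for the appropriate chord-direction angle~$\beta$, the factor $|\sin\beta|+|\cos\beta|$ ranges over $[1,\sqrt{2}]$ and depends on the angular position of the point, not on its radius; so a point at smaller radius but ``diagonal'' chord direction can have larger $L_1$ displacement than $q$. Concretely, take $c=(0,0)$, $\theta=\pi/2$, $D=2$, $q=(\sqrt{2},-\sqrt{2})$, and $p=(1.8,0)$: then $\|q-q'\|_1=2\sqrt{2}\approx 2.83$ while $\|p-p'\|_1=3.6$, and indeed $\mu_1(p,p')=3.6+\pi/2\approx 5.17$ exceeds the lemma's right-hand side $2\sqrt{2}+\pi/\sqrt{2}\approx 5.05$. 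So the lemma as stated actually fails for $i=1$; the paper's proof has the same gap you identified, only unacknowledged. The most one can salvage along these lines is $\|(x_p,y_p)-(x_{p'},y_{p'})\|_1 \le \sqrt{2}\,\|(x_q,y_q)-(x_{q'},y_{q'})\|_2$, which is what the paper effectively uses downstream anyway when it converts the $L_2$ chord bound into an $h_1$ constant.
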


\begin{figure}[hbt]
\centering
\includegraphics[width=.6\linewidth]{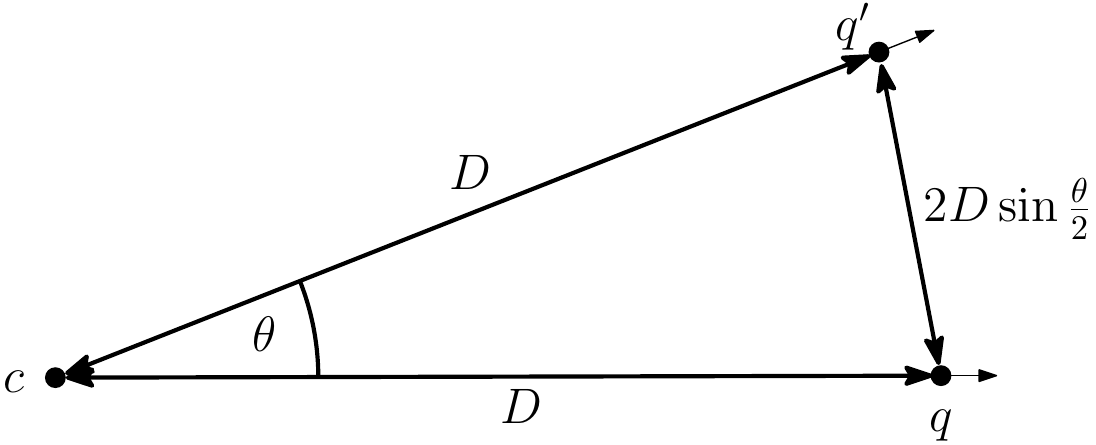}
\caption{The rotation of $q$ to $q'$ about $c$}
\label{fig:rot-dist}
\end{figure}

\begin{proof}
After applying the rotation $R_{c, \theta}$, we know $q$ has moved at least as far than any other point
because it is the farthest from the center of rotation. Without loss of generality, $0 \le \theta \le \pi$. 
Then it is easily verifiable 
that $\theta/\pi \le \sin(\theta/2)$. As $2D\sin(\theta/2)$ is 
the Euclidean distance $q$ moves under $R_{c, \theta}$, it follows that
\[
\frac{2D\theta}{\pi} \le 2D\sin(\theta/2) = \|(x_q, y_q) - (x_{q'}, y_{q'})\|_2.
\]
This scenario is illustrated in Figure~\ref{fig:rot-dist}.
Thus, $\theta \le (\pi\|(x_q, y_q) - (x_{q'}, y_{q'})\|_2)/(2D)$, which implies that $R_{c,\theta}$ 
moves the position of $q$ by at most $\|(x_q, y_q) - (x_{q'}, y_{q'})\|_i$ and changes the orientation of $q$ 
by at most $\pi\|(x_q, y_q) - (x_{q'}, y_{q'})\|_2/(2D)$. Therefore, because $q$ moves farther than any other 
point in $P$, any point $p \in P$ has moved a distance of at most $\|(x_q, y_q) - (x_{q'}, y_{q'})\|_i + \pi\|
(x_q, y_q) - (x_{q'}, y_{q'})\|_2/(2D)$ with respect to the distance function $\mu_i$.
\end{proof}

\begin{theorem}
\label{thm:BaseAlgTRlargeDiam}
Let $h_{\textrm{opt}}$ be $h_i(E(P), B)$ where $E$ is the composition of functions in $
\mathcal{T} \cup \mathcal{R}$ that attains the minimum of $h_i$, for 
$i\in\{1,2\}$. The 
algorithm above runs in time $O(n^2m\log n)$ and produces an approximation to 
$h_{\textrm{opt}}$ that is at most 
$(A_1 + \epsilon)h_{\textrm{opt}}$ for $h_1$ and 
at most $(A_2 + \epsilon)h_{\textrm{opt}}$ for $h_2$, where $\epsilon>$ is a fixed
constant,
$A_1=6 + \sqrt{2}\pi/D$, and 
$A_2=2 + \sqrt{2}(2 + \pi/D)$.
\end{theorem}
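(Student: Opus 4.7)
The plan is to extend the backwards-analysis template of Theorem~\ref{thm:BaseAlgT} by adding a rotation step after the translation. The running time is immediate from the algorithm structure: finding the diameter pair $p,q$ of $P$ takes $O(m\log m)$ via the convex hull, the outer loop iterates over $O(n^2)$ ordered pairs $(b,b')\in B\times B$, and each iteration performs $m$ approximate nearest-neighbor queries into the $3n$-point BBD tree constructed as in Section~\ref{BaseAlgT}, each costing $O(\log n)$, for a total of $O(n^2 m\log n)$. As in Theorem~\ref{thm:BaseAlgT}, the additive $\epsilon$ slack is absorbed by setting the BBD tolerance to $\epsilon/A_i$, which is a constant and so does not affect the asymptotic query time.

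For the approximation bound, let $E$ be an optimal transformation, $P^*=E(P)$, and $p^*,q^*$ the optimal images of the diameter pair. Choose $b^*,b'^*\in B$ to be the $\mu_i$-nearest background points to $p^*$ and $q^*$, so $\mu_i(p^*,b^*),\mu_i(q^*,b'^*)\le h_{\textrm{opt}}$. Since the pin step iterates over all ordered pairs of background points, at some point the algorithm pins with $(b,b')=(b^*,b'^*)$; I analyze the resulting transformation by composing, starting from $P^*$, the translation taking $p^*\mapsto b^*$ with the rotation about $b^*$ that sends the translated image $q^{**}$ of $q^*$ onto the ray from $b^*$ through $b'^*$ (choosing the smaller of the two collinearity rotations). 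This composition produces the same transformed pattern the algorithm evaluates on this pin, so bounding how far each point of $P^*$ moves under it and adding the original $h_{\textrm{opt}}$ slack bounds the Hausdorff distance the algorithm sees.

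The translation has positional $L_i$-norm at most $h_{\textrm{opt}}$ (since $\mu_i(p^*,b^*)\le h_{\textrm{opt}}$ dominates its positional component), so by Lemma~\ref{lem:tran-dist} it contributes at most $h_{\textrm{opt}}$ per point to $\mu_i$. The triangle inequality gives $\|q^{**}-b'^*\|_2\le\|q^{**}-b'^*\|_i\le 2h_{\textrm{opt}}$, and since $b'^*$ lies on the rotation's target ray while $q^{**}$ lies on the circle of radius $D$ about $b^*$, the perpendicular distance from $q^{**}$ to that ray satisfies $D\sin\phi\le 2h_{\textrm{opt}}$, where $\phi$ is the rotation angle. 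The key geometric step is converting this perpendicular bound to a chord bound: assuming $\phi\le\pi/2$ (ensured in the large-diameter regime together with the smaller-rotation choice), $\cos(\phi/2)\ge 1/\sqrt{2}$ gives $\|q^{**}-q^{***}\|_2 = 2D\sin(\phi/2)\le\sqrt{2}\,D\sin\phi\le 2\sqrt{2}\,h_{\textrm{opt}}$. Plugging this into Lemma~\ref{lem:rot-dist}, each point of $P$ moves at most $\|q^{**}-q^{***}\|_i+\pi\|q^{**}-q^{***}\|_2/(2D)$ under $\mu_i$, which is $4h_{\textrm{opt}}+\sqrt{2}\pi h_{\textrm{opt}}/D$ for $i=1$ (using $\|\cdot\|_1\le\sqrt{2}\|\cdot\|_2$ in $\mathbb{R}^2$) and $2\sqrt{2}\,h_{\textrm{opt}}+\sqrt{2}\pi h_{\textrm{opt}}/D$ for $i=2$. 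Summing the original $h_{\textrm{opt}}$, the translation's $h_{\textrm{opt}}$, and the rotation contribution yields exactly $A_1 h_{\textrm{opt}}$ and $A_2 h_{\textrm{opt}}$, respectively. The main obstacle is precisely the chord-versus-perpendicular conversion, including the justification of $\phi\le\pi/2$ via the large-diameter hypothesis and the rotation-direction choice; everything else is bookkeeping.
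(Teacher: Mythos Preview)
Your proposal is correct and follows essentially the same backwards-analysis template as the paper: translate the optimal configuration so that $p^*$ lands on its nearest background point, then rotate about that point to align $q$ with the line through the second background point, and sum the contributions via Lemmas~\ref{lem:tran-dist} and~\ref{lem:rot-dist}.

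The one place you are more explicit than the paper is exactly the step you flag as the main obstacle. The paper asserts $2D\sin(\theta/2)\le 2\sqrt{2}\,h_{\textrm{opt}}$ directly from the fact that the translated $q$ is within $2h_{\textrm{opt}}$ of $q'$, without spelling out where the $\sqrt{2}$ comes from; your perpendicular-to-chord conversion $2D\sin(\phi/2)=D\sin\phi/\cos(\phi/2)\le\sqrt{2}\,D\sin\phi$ is precisely the missing justification. One small simplification: you do not need the large-diameter hypothesis to secure $\phi\le\pi/2$; choosing the smaller of the two collinearity rotations already guarantees it, since $\min(\alpha,\pi-\alpha)\le\pi/2$ for any angle $\alpha$ between the rays. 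The large-diameter assumption matters only for making $A_i$ a constant in the running-time bound, not for the geometric argument itself.
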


\begin{proof}
The additional $\epsilon$ terms come entirely from using 
approximate nearest neighbor queries (defining BBD trees so they return
$(1+\epsilon/A_i)$-approximate nearest neighbors, for $i\in\{1,2\}$).
So it is sufficient for us to prove approximation bounds
that are $A_i\cdot h_{\textrm{opt}}$.

The first step is argued similarly to that of 
the proof of Theorem~\ref{thm:BaseAlgT}. 
Let $E$ be the composition of functions in $\mathcal{T} \cup \mathcal{R}$ 
that attains the minimum of $h(E(P), B)$ and let $P'$ be $E(P)$. Then for all $p$ in $P'$, there exists $b$ in $B$ such that $\mu_i(p,b) \le h_{\textrm{opt}}$. 
Let $p',q' \in B$ be the closest background points to optimal positions of $p$ and $q$  respectively, where $p$ and $q$ are the diametric points we choose in the first step of the algorithm. Thus, 
\[
\|(x_p, y_p) - (x_{p'}, y_{p'})\|_i \le \mu_i(p,p') \le h_{\textrm{opt}}.
\]
Apply the translation $T_v$ on $P'$ so that 
$p$ coincides with $p'$, which is equivalent to moving every point 
$\|(x_p, y_p) - (x_{p'}, y_{p'})\|_i$ with respect to position.
Lemma~\ref{lem:tran-dist}, then, implies that all points have moved at most $h_{\textrm{opt}}$.

Next, apply the rotation $R_{p, \theta}$ to $P'$ that makes $p, q$, and $q'$  
co-linear. With respect to position, $q$ moves at 
most a Euclidean distance of $2D\sin(\theta/2)$ away from $q'$ where $D$ is the Euclidean 
distance between $p$ and $q$. As all points were already at most $2h_{\textrm{opt}}$ 
away from their original background point in $B$, this implies that $2D
\sin(\theta/2) \le 2\sqrt{2}h_{\textrm{opt}}$. Thus, $\|(x_q, y_q) - (x_{q'}, y_{q'})\|_2$ is at most 
$2\sqrt{2}h_{\textrm{opt}}$. Then by Lemma~\ref{lem:rot-dist}, as $q$ is the furthest
point from $p$, the rotation moves all points at most $2\sqrt{2}h_{\textrm{opt}} + \sqrt{2}\pi h_{\textrm{opt}}/D$ with respect to $h_2$ and at most $4h_{\textrm{opt}} + \sqrt{2}\pi h_{\textrm{opt}}/D$ for $h_1$.

Since each point in the pattern set started out at most $h_{\textrm{opt}}$ away from a 
point in the background set, we combine this with the translation and rotation movements to find that every point ends up at most $(6 + \sqrt{2}\pi/D)h_{\textrm{opt}}$ away from a background point for $h_1$ and at most $(2 + \sqrt{2}(2 + \pi/D))h_{\textrm{opt}}$ away from a background point for $h_2$. As our algorithm checks this 
combination of $T_v$ and $R_{p, \theta}$, our algorithm guarantees at least this solution. Note that we assume $p'$ and $q'$ are not the same point. However if this is the case, then we know that $D \leq 2h_{\textrm{opt}}$ thus when we translate $p$ to $p'$ every point is within $(\sqrt{5}+2\pi/D)h_{\textrm{opt}}$ of $p'$, which is a better approximation than the case where $p' \neq q'$ under our assumption that $D$ is large. 
\end{proof}

\subsection{Grid Refinement}
In this subsection, we describe our grid refinement process,
which allows us to use a base algorithm 
to obtain an approximation ratio of $1+\epsilon$. 
To achieve this result, we take advantage of an important property of the fact
that we are approximating a Hausdorff distance by a pin-and-query algorithm.
Our base algorithm approximates $h_{\textrm{opt}}$
by pinning a reference pattern point, $p$, to a background point, $b$.
Reasoning backwards, if we have a pattern in an optimal position, where every pattern point, $p$,
is at distance $d\le h_{\textrm{opt}}$ from its associated nearest neighbor in the
background, then 
one of the transformations tested by the base pin-and-query algorithm moves each pattern 
point by a distance of at most $(A_i-1)d$ away from this optimal location when it performs its pinning
operation.

Suppose we could 
define a constant-sized ``cloud'' of points with respect
to each
background point, such that one of these points is guaranteed to be very close
to the optimal pinning location, much closer than the distance $d$ from the above argument.

Then, if we use these cloud points to define the transformations checked by the base algorithm,
one of these transformations will move each point from its optimal position by a much smaller distance. 

To aid us in defining such a cloud of points,
consider the set of points 
$G(p,l,k) \subset \R^2$ (where $p = (x_p, y_p)$ is some point in $\R^2
$, $l$ is some positive real value, and $k$ is some positive integer) defined 
by the following formula:
\begin{align*}
G(p,l,k) &= \bigl\{q \in \R^2 \mid \\ 
					q &= (x_p + il, y_p + jl), i,j \in \mathbb{Z}, -k \leq
i,j \leq k\bigr\}.
\end{align*}
Then $G(p,l,k)$ is a grid of $(2k+1)^2
$ points within a square of side length $2kl$ centered at $p$, where the 
coordinates of each point are offset from the coordinates of $p$ by a multiple 
of $l$. An example is shown in Figure~\ref{fig:cube}.

\begin{figure}[hbt]
\centering
\includegraphics[width=0.3\linewidth]{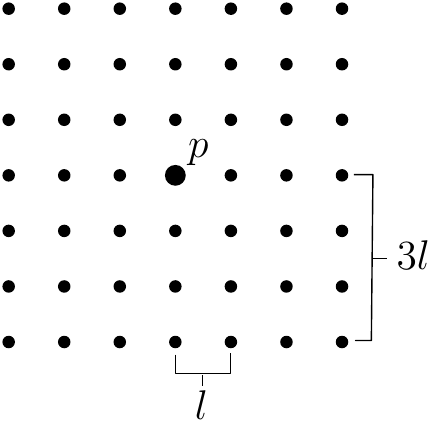}
\vspace*{-8pt}
\caption{An example of $G(p,l,3)$.}
\label{fig:cube}
\end{figure}

Now consider any point $q$ whose Euclidean distance is no more than 
$kl$ from $p$. This small distance forces point $q$ to lie within the square convex hull of 
$G(p,l,k)$. This implies that there is a point of $G(p,l,k)$ that is even closer to $q$:

\begin{lemma}
\label{lem:cube}
Let $i \in \{1,2\}$. Given two points $p,q \in \mathbb{R}^2$, if $\|p -q\|_i \leq kl$, then $\|q - s\|_1 \leq l$ and 
$\|q - s\|_2 \leq l/\sqrt{2}$, where $s$ is $q$'s closest neighbor in $G(p,l,k)
$.
\end{lemma}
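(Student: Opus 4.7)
The plan is to show that the hypothesis forces $q$ to lie inside the axis-aligned bounding square of the grid $G(p,l,k)$, and then to locate a grid point that is within $l/2$ of $q$ in each coordinate separately; the two norm bounds then follow immediately.

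First I would observe that, for both $i=1$ and $i=2$, the bound $\|p-q\|_i \le kl$ implies $\|p-q\|_\infty \le kl$, since the $L_1$ and $L_2$ balls of radius $kl$ around $p$ are both contained in the corresponding $L_\infty$ ball. Equivalently, $|x_p - x_q| \le kl$ and $|y_p - y_q| \le kl$, so $q$ lies in the square $[x_p - kl,\, x_p + kl] \times [y_p - kl,\, y_p + kl]$, which is exactly the convex hull of $G(p,l,k)$ (the picture in Figure~\ref{fig:cube}).

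Next I would argue that whenever $q$ lies in this square, its nearest point $s \in G(p,l,k)$ satisfies $|x_q - x_s| \le l/2$ and $|y_q - y_s| \le l/2$. The reason is that the $x$-coordinates $\{x_p + il : -k \le i \le k\}$ partition $[x_p - kl,\, x_p + kl]$ into subintervals of length $l$, so the closest such value to $x_q$ is within $l/2$; the same reasoning applies to the $y$-coordinate. Combining these coordinatewise bounds gives
\[
\|q-s\|_1 \;\le\; \tfrac{l}{2} + \tfrac{l}{2} \;=\; l,
\qquad
\|q-s\|_2 \;\le\; \sqrt{\bigl(\tfrac{l}{2}\bigr)^2 + \bigl(\tfrac{l}{2}\bigr)^2} \;=\; \tfrac{l}{\sqrt{2}},
\]
which is exactly the conclusion.

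There is no genuine obstacle here; the lemma is a routine grid-snapping fact. The only point that deserves a brief sentence of justification is the first step, where the hypothesis is stated in terms of $\|\cdot\|_1$ or $\|\cdot\|_2$ but we actually need an $\|\cdot\|_\infty$ bound to place $q$ inside the grid's bounding square. Once that containment is noted, the rest is a direct coordinatewise rounding estimate.
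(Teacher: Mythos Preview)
Your proof is correct and follows essentially the same approach as the paper: both arguments first place $q$ inside the bounding square of $G(p,l,k)$ and then bound the distance to the nearest grid point by $l/2$ in each coordinate. The paper phrases the second step geometrically (subdividing into $l\times l$ cells and noting the center is the worst case), while you do the equivalent coordinate-wise rounding directly; the content is the same.
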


\begin{proof}
Because $\|p-q\|_i \leq kl$, we know that $q$ exists within the square of side 
length $2kl$ which encompasses $G(p,l,k)$ (which we will refer to as $G$ for 
the remainder of this proof). This square can be divided into $(2k)^2$ 
non-overlapping squares of side length $l$. It is easy to see that the vertices of 
these squares are all points in $G$ and that $q$ exists within (or on the edge of) at least 
one of these squares. The point inside of a square that maximizes the 
distance to the square's closest vertex is the exact center of the square. 
If the side length is $l$, simple geometry shows us that at this point, the 
distance to any vertex is $l$ with respect to the $L_1$-norm and $l/\sqrt{2}$ with
respect to the $L_2$-norm. Thus, because $q$ exists within 
a square of side length $l$ whose vertices are points in $G$, the furthest 
that $q$ can be from its nearest neighbor in $G$ is $l$ for the $L_1$-norm 
and $l/\sqrt{2}$ for the $L_2$-norm.
\end{proof}

\subsection{A $(1+\epsilon)$-Approximation Algorithm Under Translation and Rotation with Large Diameter}
\label{EpsAlgTRlargeDiam}
Here, 
achieve a $(1+\epsilon)$-approximation ratio 
when we allow translations and rotations. 
Again, given two subsets of $O$, $P$ and $B$, with $|P| = m$ and $|B| = n$, our goal is to 
minimize $h_i(E(P), B)$ over all compositions $E$ of one or more functions in 
$\mathcal{T} \cup \mathcal{R}$.
We perform the following steps.
\begin{enumerate}
\item Run algorithm, BaseTranslateRotateLarge($P,B$),
from Section~\ref{BaseAlgTRlargeDiam} to obtain 
an approximation $h_{apr} \leq A\cdot h_{\textrm{opt}}$, where $A=A_1+\epsilon$ or
$A=A_2+\epsilon$, for a constant $\epsilon>0$.
\item For every $b \in B$, generate the grid of points $G_b = G(b, \frac{h_{apr}\epsilon}{A^2-A}, \lceil\frac{A^2-A}{\epsilon}\rceil)$ for $h_1$ or the grid $G_b = G(b, \frac{\sqrt{2}h_{apr}\epsilon}{A^2-A}, \lceil\frac{A^2-A}{\sqrt{2}\epsilon}\rceil)$ for $h_2$.
Let $B'$ denote the resulting point set, which is of size $O(A^4n)$, 
i.e., $|B'|$ is $O(n)$ when $A$ is a constant.
\item Run algorithm, BaseTranslateRotateLarge($P,B'$),
except use the original set, $B$, for nearest-neighbor queries in the query step.

\end{enumerate}

\ifFull
This algorithm uses the base algorithm to give us an 
indication of what the optimal solution might be, and then 
we use this approximation to generate a larger set of points from 
which to derive transformations to test. 
We then use this point set in the base algorithm when deciding which transformations to iterate over, while still using $B$ to compute nearest neighbors.
\fi
It is easy to see that this simple algorithm runs in 
$O(A^8n^2m\log n)$, 
which is $O(n^2m\log n)$ when $A$ is a constant (i.e., when the points in $P$
have a large enough diameter).

\begin{theorem}
\label{thm:EpsAlgTRlargeDiam}
Let $h_{\textrm{opt}}$ be $h_i(E(P), B)$ where $E$ is the composition of functions in 
$\mathcal{T} \cup \mathcal{R}$ that attains the minimum of $h_i$. The algorithm above runs in time 
$O(A^8n^2m\log n)$ and produces an approximation to $h_{\textrm{opt}}$ that is at most $(1+\epsilon)h_{\textrm{opt}}$ for
both $h_1$ and $h_2$.
\end{theorem}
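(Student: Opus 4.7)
The plan is to extend the argument in the proof of Theorem~\ref{thm:EpsAlgT} to the translation-plus-rotation setting, piggybacking on the error analysis in Theorem~\ref{thm:BaseAlgTRlargeDiam}. The running time is immediate: Step 1 costs $O(n^2 m \log n)$; Step 2 builds $B'$ of size $O(A^4 n)$, since each grid contains $(2k+1)^2 = O(A^4/\epsilon^2)$ points; and Step 3 runs the base algorithm over $O(|B'|^2) = O(A^8 n^2)$ pin pairs, each triggering $m$ approximate-nearest-neighbor queries into the original set $B$ of size $n$, giving $O(A^8 n^2 m \log n)$ overall.

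For correctness, I would let $E$ be an optimal transformation in $\mathcal{T} \cup \mathcal{R}$ and let $p^*, q^*$ denote the images under $E$ of the diametric pair $p, q \in P$ chosen by the base algorithm. Let $b_p, b_q \in B$ be the $B$-points closest to $p^*, q^*$ under $\mu_i$; their positional distances to $p^*, q^*$ are each at most $h_{\textrm{opt}} \le h_{apr}$. Because each grid extends $kl \ge h_{apr}$ from its center, Lemma~\ref{lem:cube} provides grid points $\tilde{b}_p \in G_{b_p} \subseteq B'$ and $\tilde{b}_q \in G_{b_q} \subseteq B'$ whose positional distances to $p^*$ and $q^*$ are at most $l = h_{apr}\epsilon/(A^2-A)$ under $L_1$ (or $l/\sqrt{2}$ under $L_2$). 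Combined with $h_{apr} \le A\, h_{\textrm{opt}}$, this bounds the pin-target error by $\epsilon h_{\textrm{opt}}/(A-1)$.

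The second call to the base algorithm will enumerate $(\tilde{b}_p, \tilde{b}_q)$ among its candidate pin pairs. Repeating the derivation inside the proof of Theorem~\ref{thm:BaseAlgTRlargeDiam} but with $b_p, b_q$ replaced by $\tilde{b}_p, \tilde{b}_q$, the translation sending $p$ to $\tilde{b}_p$ perturbs every pattern point by at most $\epsilon h_{\textrm{opt}}/(A-1)$ by Lemma~\ref{lem:tran-dist}, and the rotation around $\tilde{b}_p$ that forces $q$ onto the ray through $\tilde{b}_q$ perturbs each point by a constant multiple of this quantity via Lemma~\ref{lem:rot-dist}, with the multiplicative constant inherited from $A_1$ or $A_2$. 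Adding the initial $\mu_i$-displacement $h_{\textrm{opt}}$ of each pattern point from its optimal neighbor to the $(A-1)$-fold amplified pin error gives a $\mu_i$-distance of at most $h_{\textrm{opt}} + (A-1) \cdot \epsilon h_{\textrm{opt}}/(A-1) = (1+\epsilon)\, h_{\textrm{opt}}$.

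The main obstacle will be the constant bookkeeping: Lemmas~\ref{lem:cube} and~\ref{lem:rot-dist} both insert $\sqrt{2}$ factors when passing between $L_1$ and $L_2$, and the two grid-parameter choices in Step 2 are tuned precisely to cancel the $A_i - 1$ amplification isolated in Theorem~\ref{thm:BaseAlgTRlargeDiam}; the final BBD-tree error is folded in as in Theorem~\ref{thm:EpsAlgT} by choosing its approximation parameter proportional to $\epsilon/A$. A minor subcase where $\tilde{b}_p$ and $\tilde{b}_q$ coincide (the grid analog of $p' = q'$) forces the diameter $D$ to be small and is dispatched by the same short argument used in the base algorithm's proof.
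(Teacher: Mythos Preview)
Your proposal is correct and follows essentially the same approach as the paper: run the base algorithm to get $h_{apr}\le A\,h_{\textrm{opt}}$, use Lemma~\ref{lem:cube} to show the optimal pin positions are within $\epsilon h_{apr}/(A^2-A)\le \epsilon h_{\textrm{opt}}/(A-1)$ of some grid points in $B'$, and then rescale the $(A-1)$-factor displacement bound from Theorem~\ref{thm:BaseAlgTRlargeDiam} to conclude the total displacement from optimal is at most $\epsilon h_{\textrm{opt}}$. Your write-up is in fact more explicit than the paper's on the running-time accounting, on separating the translation and rotation contributions, and on the degenerate case $\tilde b_p=\tilde b_q$, but the underlying argument is identical.
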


\begin{proof}
Let $E$ be the composition of functions in $\mathcal{T} \cup \mathcal{R}$ that attains the minimum of $h(E(P), B)$. 
Let $P'$ be $E(P)$. Then every point $q \in P'$ is at most $h_{\textrm{opt}}$ from the closest 
background point in $B$.
By running the base algorithm, we find $h_{apr}\leq A\cdot h_{\textrm{opt}}$,
where $A$ is the approximation ratio of the base algorithm. 
Now consider the point $b'\in B$ which is the closest background to some pattern point $p \in P$. 
The square which encompasses $G_{b'}$ has a side length of $2h_{apr}$. 
This guarantees that $p$, which is at most $h_{\textrm{opt}}$ away from $b'$, lies within this square. 
As we saw from Lemma~\ref{lem:cube}, this means that $p$ is at most $\frac{\epsilon h_{apr}}{A^2-A}$ away from its nearest neighbor in $G_{b'}$. 
Thus, if a transformation defined by the nearest points in $B$ would move our pattern points at most $(A-1)h_{\textrm{opt}}$ from their optimal position, then using the nearest points in $G_{b'}$ to define our transformation will move our points at most $(A-1)\frac{\epsilon h_{apr}}{A^2-A} = \frac{\epsilon h_{apr}}{A} \leq \epsilon h_{\textrm{opt}}$. 
Thus, the modified algorithm gives a solution that is at most $(1+\epsilon)h_{\textrm{opt}}$. 
\end{proof}

\subsection{Base Algorithm Under Translation and Rotation with Small Diameter}
\label{BaseAlgTRsmallDiam}
In this subsection, we present an alternative 
algorithm for solving the approximate oriented point-set pattern matching problem where we allow translations and rotations. Compared to the algorithm given in Section~\ref{BaseAlgTRlargeDiam}, this algorithm instead provides a good approximation ratio when the diameter of our pattern set is small. 
Once again, given two subsets of $O$, $P$ and $B$, with $|P| = m$ and $|B| = n$, we wish to minimize $h_i(E(P), B)$ over all compositions $E$ of one or more functions in 
$\mathcal{T} \cup \mathcal{R}$. 
We perform the following algorithm
(see Figure~\ref{fig:BaseAlgTRsmallDiam}).

\begin{center}
\rule{\columnwidth}{2pt}
\textbf{Algorithm} BaseTranslateRotateSmall($P,B$):
\begin{algorithmic}
\STATE 
Choose some $p \in P$ arbitrarily.
\FOR {every points $b \in B$}
\STATE \emph{Pin step:} 
Apply the translation, $T_v\in {\cal T}$, that takes $p$ to $b$,
and then
apply the rotation,
$R_{p,\theta}$, that makes $p$ and $b$ have the same orientation.
\STATE
Let $P'$ denote the transformed pattern set, $P$.
\FOR {every $q \in P'$}
\STATE \emph{Query step:}
Find a nearest-neighbor of $q$ in $B$ using the $\mu_i$ metric, and update
a candidate Hausdorff distance accordingly.
\ENDFOR
\STATE \textbf{return} the smallest candidate Hausdorff distance
found as the smallest distance, $h_i(R_{p,\theta}(T_v(P)), B)$.
\ENDFOR 
\end{algorithmic}
\vspace*{-4pt}
\rule{\columnwidth}{2pt}
\end{center}

\begin{figure}[hbt]
\centering
\includegraphics[width=.6\linewidth]{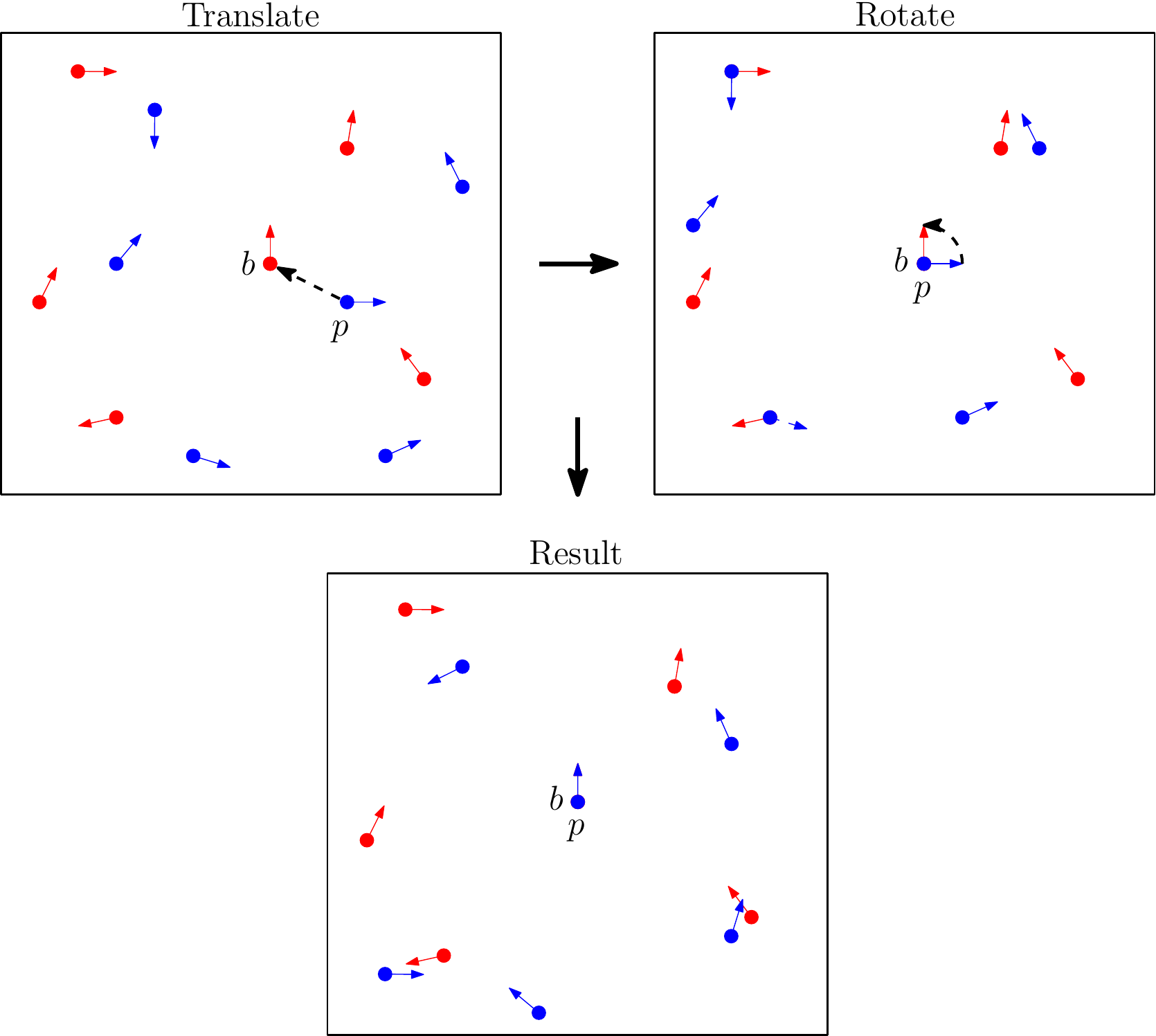}
\vspace*{-4pt}
\caption{Illustration of the translation and rotation steps of the base
approximation algorithm for translation and rotation in $O$ when diameter is
small.}
\label{fig:BaseAlgTRsmallDiam}
\end{figure}

\begin{theorem}
\label{thm:BaseAlgTRsmallDiam}
Let $h_{\textrm{opt}}$ be $h_i(E(P), B)$ where $E$ is the composition of functions in $
\mathcal{T} \cup \mathcal{R}$ that attains the minimum of $h_i$. The 
algorithm above runs in time $O(nm\log n)$ and produces an approximation to 
$h_{\textrm{opt}}$ that is at most 
$(A_i + \epsilon)h_{\textrm{opt}}$ for $h_i$, where $i=\{1,2\}$,
$\epsilon>0$ is a fixed constant,
$A_1 = 2 + \sqrt{2}D$,
and $A_2 = 2 + D$.
\end{theorem}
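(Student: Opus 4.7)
The plan is to adapt the backward-analysis template already used in Theorem~\ref{thm:BaseAlgT} and Theorem~\ref{thm:BaseAlgTRlargeDiam} to the one-point pinning carried out by BaseTranslateRotateSmall, where a single oriented point is pinned for both position and orientation. The running time is immediate: the outer loop iterates over $n$ choices of $b\in B$, and for each pinned configuration the inner loop performs $m$ approximate-nearest-neighbor queries in the BBD tree at cost $O(\log n)$ apiece, so the total is $O(nm\log n)$ on top of the $O(n\log n)$ preprocessing. As in Theorem~\ref{thm:BaseAlgT}, the $+\epsilon$ slack in $A_i+\epsilon$ comes from tuning the BBD to answer $(1+\epsilon/A_i)$-approximate queries, and it suffices to prove the clean $A_i h_{\textrm{opt}}$ bound assuming exact nearest-neighbor answers.

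For the approximation bound, let $E\in\mathcal{T}\cup\mathcal{R}$ attain $h_{\textrm{opt}}$, set $P^*=E(P)$ (so each $q^*\in P^*$ has some $b_{q^*}\in B$ with $\mu_i(q^*,b_{q^*})\le h_{\textrm{opt}}$), let $p^*=E(p)$, and let $b'\in B$ be any background point achieving $\mu_i(p^*,b')\le h_{\textrm{opt}}$. I will build a correction rigid motion $F$ that sends $p^*$ exactly to $b'$: translate by the vector $v$ from the position of $p^*$ to the position of $b'$, then rotate about $b'$ by the angle $\theta$ needed to realign the orientation of $T_v(p^*)$ with that of $b'$. The coordinate-wise extraction of $\mu_i(p^*,b')\le h_{\textrm{opt}}$ gives both $\|v\|_i\le h_{\textrm{opt}}$ and $|\theta|\le h_{\textrm{opt}}$. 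The key identification step is that the composite $F\circ E$ acts on the oriented point $p$ by sending it exactly to $b'$, which is precisely what the algorithm-tested transformation in iteration $b=b'$ does; since a planar rigid motion of oriented points has three degrees of freedom and is determined by its action on a single oriented point, $F\circ E$ coincides with that tested transformation. So it suffices to bound, for each $q^*$, the movement $\mu_i(q^*, F(q^*))$ and then combine with $\mu_i(q^*, b_{q^*})\le h_{\textrm{opt}}$ via the $\mu_i$ triangle inequality.

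The movement bound splits into the translation step and the rotation step. Lemma~\ref{lem:tran-dist} immediately gives a translation movement of at most $h_{\textrm{opt}}$ in $\mu_i$. For the rotation step, because $E$ preserves Euclidean distances between pattern points, $T_v(q^*)$ sits at Euclidean distance at most $D$ from the center $b'$, so under the rotation its position shifts by at most $2D\sin(\theta/2)\le D|\theta|\le Dh_{\textrm{opt}}$ in Euclidean norm and its orientation shifts by at most $|\theta|\le h_{\textrm{opt}}$. The planar inequality $\|\cdot\|_1\le\sqrt{2}\|\cdot\|_2$ converts the $L_2$ position shift into an $L_1$ position shift of at most $\sqrt{2}Dh_{\textrm{opt}}$; for $\mu_2$, the bound $\sqrt{a^2+b^2}\le a+b$ collapses the position/orientation pair into a $\mu_2$ rotation movement of at most $(D+1)h_{\textrm{opt}}$. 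Summing the translation movement, the rotation movement, and the initial slack $h_{\textrm{opt}}$ in the same accounting style as the proof of Theorem~\ref{thm:BaseAlgTRlargeDiam} then yields the stated $A_1=2+\sqrt{2}D$ and $A_2=2+D$.

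The main obstacle I anticipate is the identification of $F\circ E$ with an algorithm-enumerated transformation, which rests on the three-degree-of-freedom characterization of planar rigid motions on oriented points and on the fact that the algorithm's pin-step produces every rigid motion that sends $p$ to some oriented background point. A secondary subtlety is arranging the $\mu_2$ arithmetic so that the dependence on $D$ stays linear rather than turning into a $\sqrt{D^2+\cdots}$ expression that would only simplify cleanly in the large-$D$ regime, which is where the complementary BaseTranslateRotateLarge algorithm is meant to dominate anyway.
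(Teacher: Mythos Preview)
Your overall strategy matches the paper's: backward analysis from an optimal placement $P^*$, correcting by a rigid motion $F$ that pins the oriented point $p^*$ onto its nearest background neighbour $b'$, identifying $F\circ E$ with one of the algorithm-enumerated motions (your three-degrees-of-freedom remark is exactly right), and then bounding how far $F$ drags the other pattern points. The running-time argument and the handling of the $+\epsilon$ term are also the same.

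The accounting, however, does not deliver the stated constants. You extract $\|v\|_i\le h_{\textrm{opt}}$ and $|\theta|\le h_{\textrm{opt}}$ \emph{separately} and then add a ``translation movement $\le h_{\textrm{opt}}$'' to a ``rotation movement $\le (D+1)h_{\textrm{opt}}$'' (for $\mu_2$) and the initial $h_{\textrm{opt}}$, obtaining $(3+D)h_{\textrm{opt}}$ rather than $(2+D)h_{\textrm{opt}}$; the $\mu_1$ tally likewise comes out to $(3+\sqrt{2}D)h_{\textrm{opt}}$. The over-count is exactly the $|\theta|$ that appears once in the translation/orientation budget and again in the rotation's angular part. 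The paper avoids this by not splitting via the $\mu_i$ triangle inequality at the intermediate point $T_v(q^*)$: instead it observes that the \emph{positional} shift from $T_v$ together with the \emph{angular} shift from $R_{b',\theta}$ is precisely $\mu_i(p^*,b')\le h_{\textrm{opt}}$ (for $\mu_1$ this is literal; for $\mu_2$ one uses $\sqrt{(\|v\|_2+d)^2+\theta^2}\le\sqrt{\|v\|_2^2+\theta^2}+d$), so the total $\mu_i$-movement of any $q^*$ under $F$ is at most $\mu_i(p^*,b')+d$, where $d\le D|\theta|\le Dh_{\textrm{opt}}$ is the rotation's \emph{positional} displacement of $q^*$. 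Adding the initial $h_{\textrm{opt}}$ then gives $2+\sqrt{2}D$ and $2+D$ on the nose. Replace your separate translation/rotation sum with this joint bound and the proof goes through.
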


\begin{proof}
The additional $\epsilon$ terms come entirely from using 
approximate nearest neighbor queries, so it is sufficient 
to prove approximations which do not include the $\epsilon$ term using 
exact nearest neighbor queries (defining the BBD tree so that it returns
points that are $(1+\epsilon/A_i)$-approximate nearest neighbors). 
Particularly, we will prove a bound of $(2 + \sqrt{2}D)h_{\textrm{opt}}$ for $h_1$ and a bound of $(2 + D)h_{\textrm{opt}}$ for $h_2$.

Let $E$ be the composition of functions in $\mathcal{T} \cup \mathcal{R}$ 
that attains the minimum of $h(E(P), B)$. Let $P'$ be $E(P)$. Then every point 
$p \in P'$ is at most $h_{\textrm{opt}}$ from the closest background point in $B$. That 
is, for all $p$ in $P'$, there exists $b$ in $B$ such that $\mu_i(p,b) \le h_{\textrm{opt}}$. 
Let $p' \in B$ be the closest background point to the optimal position of $p$ where $p$ is the point we chose in the first step of the algorithm. Thus, 
\[
\mu_i(p,p') \le h_{\textrm{opt}}.
\]
Apply the translation $T_v$ and rotation $R_{p, \theta}$ on $P'$ so that 
$p$ coincides with $p'$ and both points have the same orientation. It is easy to see that $p$ has moved from its optimal position by exactly $\mu_i(p, p') \leq h_{\textrm{opt}}$. Using Lemma~\ref{lem:tran-dist} and the fact that a rotation on $P$ causes the orientation of each point in $P$ to change by the same amount, we find that every point $q \in P$ has moved at most $\mu_i(p, p') + d$ from its original position, where $d$ is the change in the position of $q$ caused by the rotation.

We know that the angle rotated, $\theta$, must be less than $h_{\textrm{opt}}$ and, without loss of generality, we assume $0 \leq \theta \leq \pi$. Therefore it is easily verifiable that $\sin(\theta/2) \leq \theta/2$. If $D$ is the diameter of $P$, then regardless of our choice of $p$, each point in $P$ is displaced at most $2D\sin(\theta/2)$ by the rotation. Thus each point is displaced at most $D\theta \leq Dh_{\textrm{opt}}$.

Since each point in the pattern set started out at most $h_{\textrm{opt}}$ away from a 
point in the background set, we combine this with the translation and rotation movements to find that every point ends up at most $(2 + \sqrt{2}D)h_{\textrm{opt}}$ away from a background point for $h_1$ and at most $(2 + D)h_{\textrm{opt}}$ away from a background point for $h_2$. As our algorithm checks this combination of $T_v$ and $R_{p, \theta}$, our algorithm guarantees at least this solution.
\end{proof}

\subsection{A $(1+\epsilon)$-Approximation Algorithm Under 
Translation and Rotation with Small Diameter}
\label{EpsAlgTRsmallDiam}
In this subsection, we utilize the algorithm from 
Section~\ref{BaseAlgTRsmallDiam} to achieve a $(1+\epsilon)$-approximation ratio when we allow translations and rotations. Again, given two subsets of $O$, $P$ and $B$, with $|P| = m$ and $|B| = n$, our goal is to 
minimize $h_i(E(P), B)$ over all compositions $E$ of one or more functions in 
$\mathcal{T} \cup \mathcal{R}$.
We begin by describing another type of grid refinement we use in this case.

In particular, let us consider a set of points 
$C(p,k) \subset O$ where $p = (x_p, y_p, a_p)$ is some point in $O$ 
and $k$ is some positive integer. We define 
the set in the following way (see Figure~\ref{fig:circle}):
\begin{align*}
C(p,k) &= \{q \in O \vert \\
q &= (x_p, y_p, a+2\pi
i/k \mod 2\pi),  i \in \mathbb{Z}, 1 \leq i \leq k\}.
\end{align*}
 
\begin{figure}[hbt]
\centering
\includegraphics[width=0.35\linewidth]{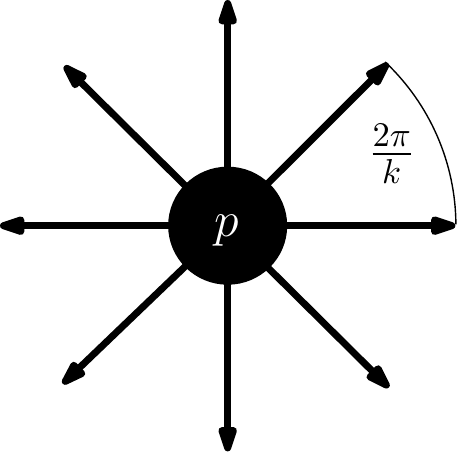}
\caption{An example of $C(p,8)$.}
\label{fig:circle}
\end{figure}

From this definition, we can see that $C(p,k)$ is a set of points
that share the same position as $p$ but have different orientations that are equally spaced out, with each point's orientation being an angle of $\frac{2\pi}{k}$ away from the previous point. 
Therefore, it is easy to see that, for any point $q \in O$, 
there is a point in $C(p,k)$ whose orientation is at 
most an angle of $\frac{\pi}{k}$ away from the orientation of $q$. 
Given this definition, our algorithm is as follows.

\begin{enumerate}
\item Run algorithm, BaseTranslateRotateSmall($P,B$),
from Section~\ref{BaseAlgTRsmallDiam}, 
to obtain $h_{apr} \leq A\cdot h_{\textrm{opt}}$.
\item For every $b \in B$, generate the point set
\[
G_b = G\left(b, \frac{h_{apr}\epsilon}{2(A^2-A)}, \left\lceil\frac{2(A^2-A)}{\epsilon}\right\rceil\right)
\]
for $h_1$ or
\[
G_b = G\left(b, \frac{h_{apr}\epsilon}{A^2-A}, \left\lceil\frac{A^2-A}{\epsilon}\right\rceil\right)
\]
for $h_2$.
Let $B'$ denote the resulting set of points, i.e.,
$B' = \bigcup_{b \in B}G_b$.
\item 
For every $b' \in B'$,
generate the point set
\[
C_{b'} = C\left(b', \frac{2(A^2-A)}{\pi h_{apr}\epsilon}\right)
\]
for $h_1$ or
\[
C_{b'} = C\left(b', \frac{\sqrt{2}(A^2-A)}{\pi h_{apr}\epsilon}\right)
\]
for $h_2$.
Let $B''$ denote the resulting set of points.
\item Run algorithm, BaseTranslateRotateSmall($P,B''$),
but continue to use the points in $B$ for nearest-neighbor queries.

\end{enumerate}

Intuitively, this algorithm uses the base algorithm to give us an indication of what the optimal solution might be. 
We then use this approximation to generate a larger set of points from which to derive transformations to test, but this time we also generate a number of different orientations for those points as well. We then use this point set in the base algorithm when deciding which transformations to iterate over, while still using $B$ to compute nearest neighbors.

The first step of this algorithm runs in time $O(nm\log n)$, as we showed. 
The second step takes time proportional to the number of points which have to be generated, which is determined by $n$, our choice of the constant $\epsilon$, 
and the approximation ratio, $A$, of our base algorithm. 
The time needed to complete the second step is $O(A^4n)$. 
The third step generates even more points based on points generated in step two, 
which increases the size of $B''$ to be $O(A^6n)$. 
The last step runs in time $O(A^6nm\log n)$, which is also the running time 
for the full algorithm. 

\begin{theorem}
\label{thm:EpsAlgTRsmallDiam}
Let $h_{\textrm{opt}}$ be $h_i(E(P), B)$ where $E$ is the composition of functions in 
$\mathcal{T} \cup \mathcal{R}$ that attains the minimum of $h_i$. The algorithm above runs in time 
$O(A^6nm\log n)$ and produces an approximation to $h_{\textrm{opt}}$ that is at most $(1+\epsilon)h_{\textrm{opt}}$ for
both $h_1$ and $h_2$.
\end{theorem}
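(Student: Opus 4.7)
The plan is to mirror the proof strategy of Theorem~\ref{thm:EpsAlgTRlargeDiam}, but to handle orientation explicitly using the circular refinement $C_{b'}$ in addition to the positional grid $G_{b'}$, since BaseTranslateRotateSmall pins rotation by matching a single orientation rather than by pinning a second point. Let $E$ be the optimal composition of translations and rotations and set $P' = E(P)$, so every point of $P'$ is within $\mu_i$-distance $h_{\textrm{opt}}$ of some point of $B$. Running BaseTranslateRotateSmall once gives $h_{apr} \leq A \cdot h_{\textrm{opt}}$, where $A$ is the constant from Theorem~\ref{thm:BaseAlgTRsmallDiam}; it suffices to show the refined algorithm finds a transformation within $(1+\epsilon)h_{\textrm{opt}}$, because the $(1+\epsilon/A)$-approximate nearest-neighbor factor from the BBD tree can be folded into $A$ as in the earlier theorems.

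Next I would track what happens to the pinned pattern point $p$ chosen by the base call. Let $p^{\ast}$ be its position under $E$ and let $b'\in B$ be its nearest background point. By Lemma~\ref{lem:cube}, $G_{b'}$ contains a grid point $g$ within positional distance at most $\frac{h_{apr}\epsilon}{2(A^2-A)}$ in the $L_1$-norm (respectively $\frac{h_{apr}\epsilon}{\sqrt{2}(A^2-A)}$ in the $L_2$-norm), since $p^{\ast}$ lies in the square of side $2h_{apr}$ surrounding $b'$. By construction, $C_g$ contains $k$ orientations equally spaced on the circle, so it has a point $c$ whose orientation differs from that of $p^{\ast}$ by at most $\pi/k$; the chosen value of $k$ makes this gap at most the same bound just derived for the positional term. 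Combining the two yields an oriented point $c\in B''$ with $\mu_i(c,p^{\ast}) \leq \frac{h_{apr}\epsilon}{A^2-A}$.

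Now I would apply the backward analysis used inside the proof of Theorem~\ref{thm:BaseAlgTRsmallDiam}, but with $c$ in place of the nearest background point. That analysis shows that pinning the pattern at an oriented point within $\mu_i$-distance $\delta$ of $p^{\ast}$ moves every pattern point by at most $(A-1)\delta$ in the combined $\mu_i$ metric, since the positional contribution is at most $\delta$ and the rotation-induced contribution is at most $D\delta$. Substituting $\delta = \frac{h_{apr}\epsilon}{A^2-A}$ gives a total displacement bounded by $(A-1)\cdot \frac{h_{apr}\epsilon}{A^2-A} = \frac{h_{apr}\epsilon}{A} \leq \epsilon h_{\textrm{opt}}$. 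Since each pattern point started within $h_{\textrm{opt}}$ of its background neighbor in $B$, the transformation that pins $p$ to $c$ realizes directed Hausdorff distance at most $(1+\epsilon)h_{\textrm{opt}}$, and because the refined run of BaseTranslateRotateSmall enumerates every point of $B''$ in its pin step, it returns a value at least as good.

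The main obstacle will be calibrating the angular spacing in $C$ so that, after amplification by the diameter $D$ hidden inside $A$, the orientation error does not swamp the $\epsilon$-budget. Concretely, an orientation offset of $\tau$ at the pin induces a rotation of $\tau$ that displaces a point at distance $D$ from the pin by roughly $D\tau$ in position, so the orientation and positional offsets must both be controlled at the same scale relative to $A^2-A$. This is exactly why $k$ is taken proportional to $(A^2-A)/(h_{apr}\epsilon)$ rather than independent of the diameter: by matching the positional and orientational pieces each to $\frac{h_{apr}\epsilon}{2(A^2-A)}$, their combined contribution after $(A-1)$-amplification fits inside $\epsilon h_{\textrm{opt}}$, uniformly for both $\mu_1$ and $\mu_2$.
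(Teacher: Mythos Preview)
Your proposal is correct and follows essentially the same route as the paper's proof: run the base algorithm to obtain $h_{apr}$, use Lemma~\ref{lem:cube} to find a grid point $g\in G_{b'}$ close in position, then a point $c\in C_g$ close in orientation, combine to get $\mu_i(c,p^\ast)\le \frac{h_{apr}\epsilon}{A^2-A}$, and finally invoke the scaling of the backward analysis from Theorem~\ref{thm:BaseAlgTRsmallDiam} to bound the displacement by $(A-1)\cdot\frac{h_{apr}\epsilon}{A^2-A}\le \epsilon h_{\textrm{opt}}$. Your closing paragraph on calibrating the angular spacing against the $D$-amplification hidden in $A$ is additional intuition, not a different argument.
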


\begin{proof}
Let $E$ be the composition of functions in $\mathcal{T} \cup \mathcal{R}$ that attains the minimum of $h_i(E(P), B)$. 
Let $P'$ be $E(P)$. Then every point $q \in P'$ is at most $h_{\textrm{opt}}$ from the closest 
background point in $B$.
By running the base algorithm, we find $h_{apr}\leq A\cdot h_{\textrm{opt}}$ where $A$ is the approximation ratio of the base algorithm. 
Now consider the point $b'\in B$ which is the closest background to some pattern point $p \in P$. 
The square which encompasses $G_{b'}$ has a side length of $2h_{apr}$. 
This guarantees that $p$, which is at most $h_{\textrm{opt}}$ away from $b'$, lies within this square. 
As we saw from Lemma~\ref{lem:cube}, this means that $p$ is at most $\frac{\epsilon h_{apr}}{2(A^2-A)}$ away from its nearest neighbor $g$ in $G_{b'}$ with respect to the $L1$-norm, and at most $\frac{\epsilon h_{apr}}{\sqrt{2}(A^2-A)}$ with respect to the $L2$-norm. 
For this point, $g$, there are a number of points in $C_g$ which are at the same position but with different orientation. 
For some point $c$ in $C_g$, the orientation of point $p$ is within an angle of at most $\frac{h_{apr}\epsilon}{2(A^2-A)}$ for $h_1$ and at most $\frac{h_{apr}\epsilon}{\sqrt{2}(A^2-A)}$ for $h_2$. 
If we combine together the maximum difference in position between $p$ and $c$, and the maximum difference in orientation between $p$ and $c$, then we see that for both $\mu_1$ and $\mu_2$, the distance between $p$ and $c$ is at most $\frac{h_{apr}\epsilon}{A^2-A}$. 
Thus, if a transformation defined by the nearest point in $B$ would move our pattern points at most $(A-1)h_{\textrm{opt}}$ from their optimal position, then using the nearest point in $C_{g}$ to define our transformation will move our points at most $(A-1)\frac{\epsilon h_{apr}}{A^2-A} = \frac{\epsilon h_{apr}}{A} \leq \epsilon h_{\textrm{opt}}$. 
Thus, the modified algorithm gives a solution that is at most $(1+\epsilon)h_{\textrm{opt}}$. 
\end{proof}

\subsection{Combining the Algorithms for Large and Small Diameters}

For the two cases above, we provided two base algorithms that each had a corresponding $(1+\epsilon)$-approximation algorithm. 
As mentioned above,
we classified the two by whether the algorithm 
achieved a good approximation when the 
diameter of the pattern set was large or small.
This is because the large diameter base algorithm has 
an approximation ratio with terms that are inversely proportional to 
the diameter, and the small diameter base algorithm has
an approximation ratio 
with terms that are directly proportional to the diameter. 

Both of the resulting $(1+\epsilon)$-approximation algorithms 
have running times which are affected by the approximation ratio of their base algorithm, meaning their run times are dependent upon the diameter of the pattern set. 
We can easily see, however, that the approximation ratios of the 
large and small diameter base algorithms intersect at 
some fixed constant diameter, $D^*$. 
For $h_1$, if we compare the expressions $6 + \sqrt{2}\pi/D$ and $2+\sqrt{2}D$, we find that the two expressions are equal at $D^*=\sqrt{2} + \sqrt{2+\pi} \approx 3.68$. 
For $h_2$, we compare $2 + \sqrt{2}(2 + \pi/D)$ and $2+D$ to find that they are equal at $D^* = \sqrt{2} + \sqrt{2+\sqrt{2}\pi} \approx 3.95$. 
For diameters larger than $D^*$, the approximation ratio of the large diameter algorithm is smaller than at $D^*$, and for diameters smaller than $D^*$, the approximation ratio of the small diameter algorithm is smaller than at $D^*$. 
Thus, if we choose to use the small diameter algorithms when the diameter is less than $D^*$ and we use the large diameter algorithms when the diameter is greater or equal to $D^*$, we ensure that the approximation ratio is no more than the constant
value that depends on the constant $D^*$.
Thus, based on the diameter of the pattern set, we can decide \textit{a priori} if 
we should use our algorithms for large 
diameters or small diameters and just go with that set of
algorithms.
This implies that we are guaranteed that our approximation factor, $A$, in our
base algorithm is always bounded above by a constant; hence, our running time for the 
translation-and-rotation case is $O(n^2 m\log n)$.

\section{Translation, Rotation, and Scaling}
In this section, we show how to adapt our algorithm for translations and rotations
so that it works for translations, rotations, and scaling.
The running times are the same as for the translation-and-rotation cases.

\subsection{Base Algorithm Under Translation, Rotation and Scaling with Large Diameter}
\label{BaseAlgTRSlargeDiam}
In this section we present an algorithm for solving the approximate oriented point-set pattern matching problem where we allow translations, rotations and scaling. This algorithm is an extension of the algorithm from Section~\ref{BaseAlgTRlargeDiam} and similarly provides a good approximation ratio when the diameter of our pattern set is large. 
Given two subsets $P$ and $B$ of $O$, with $|P| = m$ and $|B| = n$, we wish to minimize 
$h_i(E(P), B)$ over all compositions $E$ of one or more functions in 
$\mathcal{T} \cup \mathcal{R} \cup \mathcal{S}$. 
We perform the following algorithm:

\begin{center}
\rule{\columnwidth}{2pt}

\textbf{Algorithm} BaseTranslateRotateScaleLarge($P,B$):
\begin{algorithmic}
\STATE 
Find $p$ and $q$ 
in $P$ having the maximum value of $\|(x_p, y_p) - (x_q, y_q)\|_2$.
\FOR {every pair of points $b,b' \in B$}
\STATE \emph{Pin step:} 
Apply the translation, $T_v\in {\cal T}$, that takes $p$ to $b$,
and apply the rotation,
$R_{p,\theta}$, that makes $p$, $b'$, and $q$ collinear.
Then apply the scaling, $S_{p,s}$, that makes $q$ and $b'$ share the same position.
\STATE
Let $P'$ denote the transformed pattern set, $P$.
\FOR {every $q \in P'$}
\STATE \emph{Query step:}
Find a nearest-neighbor of $q$ in $B$ using the $\mu_i$ metric, and update
a candidate Hausdorff distance accordingly.
\ENDFOR
\STATE \textbf{return} the smallest candidate Hausdorff distance
found as the smallest Hausdorff distance, $h_i(S_{p,s}(R_{p,\theta}(T_v(P))), B)$.
\ENDFOR 
\end{algorithmic}
\vspace*{-4pt}
\rule{\columnwidth}{2pt}
\end{center}

This algorithm extends the algorithm presented in Section~\ref{BaseAlgTRlargeDiam} so that after translating and rotating, we also scale the point set such that, after scaling, $p$ and $b$ have the same $x$ and $y$ coordinates, and $q$ and $b'$ have the same $x$ and $y$ coordinates.
As with the 
algorithm presented in Section~\ref{BaseAlgTRlargeDiam},
this algorithm runs in $O(n^2m\log n)$ time.

\begin{figure}[hbt]
\centering
\includegraphics[width=.8\linewidth]{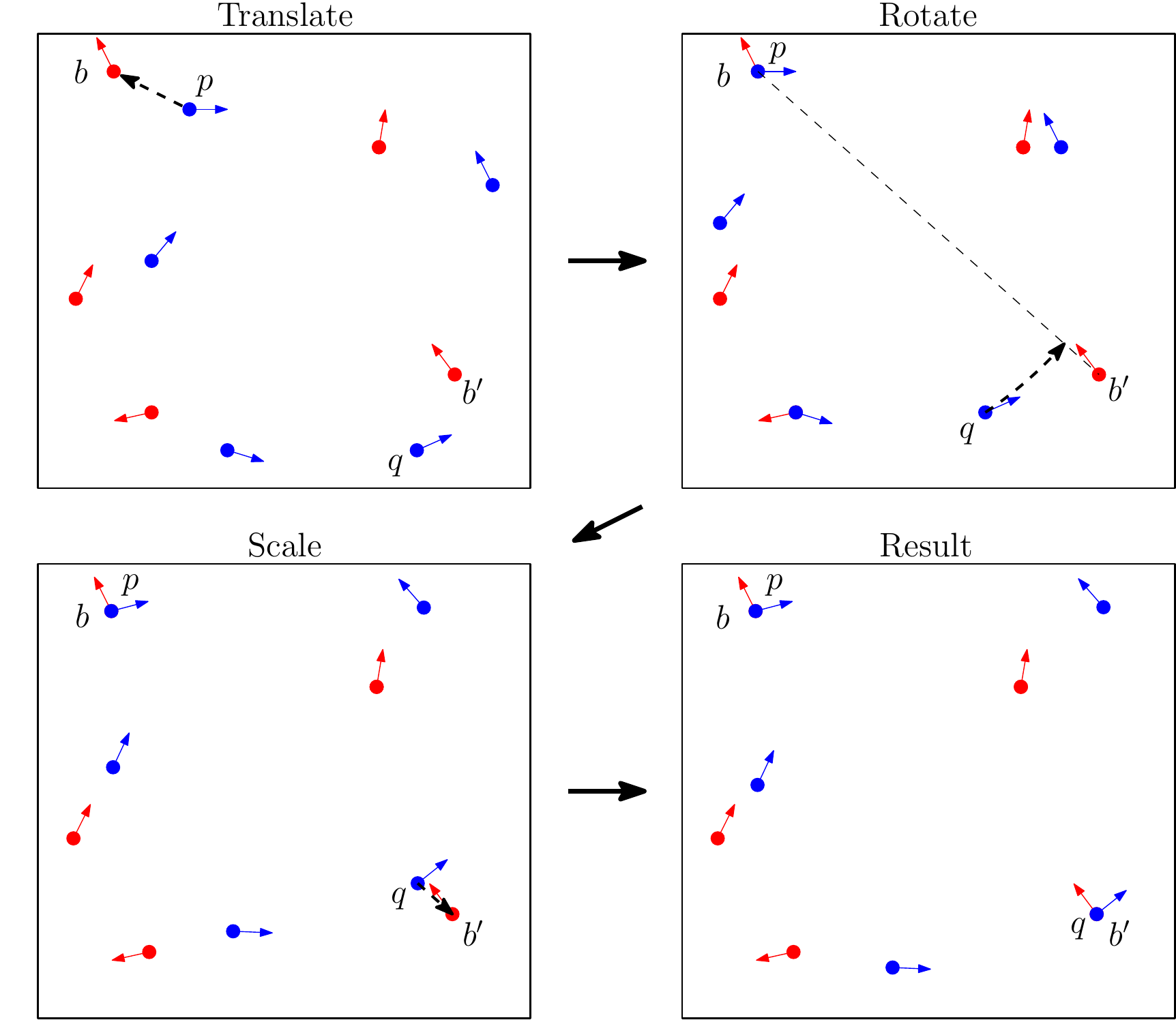}
\caption{Illustration of the translation, rotation and scaling steps of the base
approximation algorithm for translation, rotation and scaling in $O$ when diameter is
large.}
\label{fig:BaseAlgTRSlargeDiam}
\end{figure}

\begin{theorem}
\label{thm:BaseAlgTRSlargeDiam}
Let $h_{\textrm{opt}}$ be $h_i(E(P), B)$ where $E$ is the composition of functions in $
\mathcal{T} \cup \mathcal{R} \cup \mathcal{S}$ that attains the minimum of $h_i$. 
The algorithm above runs in time $O(n^2m\log n)$ and produces an approximation to 
$h_{\textrm{opt}}$ that is at most $(6 + \sqrt{2}(2 + \pi/D) + \epsilon)h_{\textrm{opt}}$ for $h_1$ and at most $(4 + \sqrt{2}(2 + \pi/D) + \epsilon)h_{\textrm{opt}}$ for $h_2$.
\end{theorem}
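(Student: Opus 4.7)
The plan is to extend the backward displacement analysis from the proof of Theorem~\ref{thm:BaseAlgTRlargeDiam} by adding one additional stage that handles scaling. As in that theorem, the additive $\epsilon$ is absorbed by configuring each BBD tree query to return a $(1+\epsilon/A)$-approximate nearest neighbor for the appropriate constant $A$, so the substantive content is a constant-factor bound assuming exact nearest-neighbor queries. The running time is inherited from Section~\ref{BaseAlgTRlargeDiam} because computing the scale factor $s$ adds only constant work per outer iteration.

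I would fix an optimal composition $E \in \mathcal{T}\cup\mathcal{R}\cup\mathcal{S}$, set $P' = E(P)$, and let $p', q' \in B$ be the background points closest in $\mu_i$ to the positions of $p$ and $q$ in $P'$, so $\mu_i(p,p'), \mu_i(q,q') \le h_{\textrm{opt}}$ (abusing notation by identifying $p,q$ with their positions in $P'$). I would analyze, stage by stage, the transformation the algorithm applies when iterating over the pair $(p', q')$, viewed as a transformation of $P'$: translation sending the position of $p$ to $p'$; rotation about $p'$ aligning the image of $q$ with the ray from $p'$ to $q'$; and scaling about $p'$ sending the image of $q$ onto $q'$. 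The translation stage contributes at most $h_{\textrm{opt}}$ per point by Lemma~\ref{lem:tran-dist}. The rotation stage is exactly the step analyzed in the proof of Theorem~\ref{thm:BaseAlgTRlargeDiam}: since $p$–$q$ is Euclidean-diametric in $P$ (hence its image remains the diametric pair throughout the similitude $E$ and under subsequent rigid motions), Lemma~\ref{lem:rot-dist} applies with $q$ as the farthest point from $p'$ and yields a per-point displacement of at most $4h_{\textrm{opt}} + \sqrt{2}\pi h_{\textrm{opt}}/D$ in $\mu_1$ or $2\sqrt{2}h_{\textrm{opt}} + \sqrt{2}\pi h_{\textrm{opt}}/D$ in $\mu_2$.

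The new content, which I expect to be the main obstacle, is the displacement analysis for the scaling stage. Let $D^*$ denote the Euclidean distance between $p$ and $q$ in $P'$; the scale factor used here is $s = \|p' - q'\|_2 / D^*$. Since the planar Euclidean position distance is bounded above by $\mu_i$ for both $i \in \{1,2\}$, two applications of the triangle inequality give $\bigl|\,\|p'-q'\|_2 - D^*\bigr| \le 2h_{\textrm{opt}}$, and hence $|s - 1| \le 2h_{\textrm{opt}}/D^*$. After translation and rotation about $p'$, every point of (the transformed) $P'$ still sits at Euclidean distance at most $D^*$ from $p'$, because these are rigid motions and $D^*$ is the diameter of $P'$. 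The scaling therefore displaces each point by a Euclidean position distance of at most $|s-1|\cdot D^* \le 2h_{\textrm{opt}}$; the key observation is that the factor $D^*$ cancels out of this product, making the scaling displacement independent of $D^*$ (and of $D$). Because scaling fixes angular coordinates, this Euclidean displacement is at most $2h_{\textrm{opt}}$ in $\mu_2$ and, via Cauchy--Schwarz, at most $2\sqrt{2}\,h_{\textrm{opt}}$ in $\mu_1$.

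Summing the initial $h_{\textrm{opt}}$ gap between $P'$ and $B$ with the three stage displacements yields
$(1 + 1 + 4 + \sqrt{2}\pi/D + 2\sqrt{2})h_{\textrm{opt}} = (6 + \sqrt{2}(2 + \pi/D))h_{\textrm{opt}}$ in $\mu_1$
and
$(1 + 1 + 2\sqrt{2} + \sqrt{2}\pi/D + 2)h_{\textrm{opt}} = (4 + \sqrt{2}(2 + \pi/D))h_{\textrm{opt}}$ in $\mu_2$,
matching the claimed bounds. The degenerate case $p' = q'$ forces $D^* \le 2h_{\textrm{opt}}$, and a direct per-point bound using only the translation stage then yields a strictly smaller approximation under the large-diameter assumption, handled exactly as in the proof of Theorem~\ref{thm:BaseAlgTRlargeDiam}.
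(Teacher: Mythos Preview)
Your proposal is correct and follows essentially the same overall structure as the paper: reuse the translation/rotation analysis from Theorem~\ref{thm:BaseAlgTRlargeDiam} verbatim, then bound the extra displacement introduced by the scaling stage and add it on. The one substantive difference is how you bound that scaling displacement. The paper argues geometrically: letting $q_t$ be the position of $q$ after translation but before rotation, it observes that $q$, $q'$, and $q_t$ are three vertices of an isosceles trapezoid with $q_tq'$ a diagonal and $qq'$ a leg, so $\|q-q'\|_2\le\|q_t-q'\|_2\le 2h_{\textrm{opt}}$, the last inequality coming from the translation-only analysis. You instead bound $|s-1|$ directly via the triangle inequality on positions in the optimal configuration $P'$, obtaining $\bigl|\,\|p'-q'\|_2-D^*\bigr|\le 2h_{\textrm{opt}}$ and hence $|s-1|\cdot D^*\le 2h_{\textrm{opt}}$. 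Both routes bound the same quantity (the Euclidean distance $q$ moves under the scaling) by the same $2h_{\textrm{opt}}$, and both then propagate this to all points using that $q$ is farthest from the scaling center; your triangle-inequality version is arguably the more elementary of the two, while the paper's trapezoid picture makes the geometry of the three stages more explicit.
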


\begin{proof}
The additional $\epsilon$ terms come entirely from using approximate nearest neighbor queries, so it is sufficient to prove approximations which do not include the $\epsilon$ term using exact nearest neighbor queries. Particularly, we will prove a bound of $(6 + \sqrt{2}(2 + \pi/D))h_{\textrm{opt}}$ for $h_1$ and a bound of $(4 + \sqrt{2}(2 + \pi/D))h_{\textrm{opt}}$ for $h_2$.

Let $E$ be the composition of functions in $\mathcal{T} \cup \mathcal{R} \cup \mathcal{S}$ 
that attains the minimum of $h_i(E(P), B)$. Let $P'$ be $E(P)$. Because this algorithm is only an extension of the algorithm in Section~\ref{BaseAlgTRlargeDiam} we can follow the same logic as the proof of Theorem~\ref{thm:BaseAlgTRlargeDiam} to see that after the translation and rotation steps, each point  $p \in P'$ is at most $Ah_{\textrm{opt}}$ away from a background point $b \in B$ where $A = 6 + \sqrt{2}\pi/D$ for $h_1$ and $A = 2 + \sqrt{2}(2 + \pi/D)$ for $h_2$. Now we need only look at how much scaling increases the distance our points have moved. 

If $p,q \in P'$ are our diametric points after translation and rotation, and $p',q' \in B$ are the closest background points to the optimal position of $p$ and $q$ respectively, then let us define the point $q_t$ as the position of $q$ after translation, but prior to the rotation step. Now it is important to see that the points $q$, $q'$ and $q_t$ are three vertices of an isosceles trapezoid  where the line segment $q_tq'$ is a diagonal of the trapezoid and the line segment $qq_t$ is a base of the trapezoid. This situation is depicted in Figure~\ref{fig:baseTRStrap}. The length of the line segment $qq'$ is equal to the distance that $q$ will move when we scale $P'$ so that $q$ and $q'$ share the same position. Because $qq'$ is a leg of the trapezoid, the length of that leg can be no more than the length of the diagonal $q_tq'$. In the proof of Theorem~\ref{thm:BaseAlgTRlargeDiam}, we showed that $q_t$ is at most $2h_{\textrm{opt}}$ away from $q'$ so this implies that the distance $q$ moves from scaling is at most $2h_{\textrm{opt}}$.

Point $q$ is the farthest point away from the point $p$ that is the
center for scaling. Thus, no point moved farther as a result of
the scaling than $q$ did, with respect to $\mu_2$. For $\mu_1$ it is
possible that, if $q$ moved a distance $d$, another point could have
moved up to a distance $\sqrt{2}d$. Thus, we find that after scaling, any point in $P'$ is at most $(A+2\sqrt{2})h_{\textrm{opt}}$ and $(A+2)h_{\textrm{opt}}$ from its nearest background point for $\mu_1$ and $\mu_2$ respectively. Because this is a transformation that the algorithm checks, we are guaranteed at least this solution. Note that we assume $p'$ and $q'$ are not the same point. However if this is the case, then we know that $D \leq 2h_{\textrm{opt}}$ thus when we translate $p$ to $p'$ and scale $q$ down to $p'$ every point is within $(2\pi/D)h_{\textrm{opt}}$ of $p'$, which is a better approximation than the case where $p' \neq q'$ under our assumption that $D$ is large.
\end{proof}

\begin{figure}[hbt]
\centering
\includegraphics[width=0.7\linewidth]{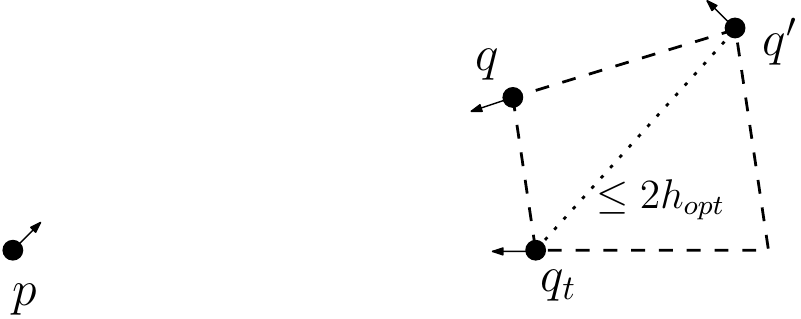}
\caption{Illustration of the points $q$, $q'$, and $q_t$ forming three of the corners of an isosceles trapezoid, as described in the proof of Theorem~\ref{thm:BaseAlgTRSlargeDiam}}
\label{fig:baseTRStrap}
\end{figure}

\subsection{A $(1+\epsilon)$-Approximation Algorithm Under Translation, Rotation and Scaling with Large Diameter}
\label{EpsAlgTRSlargeDiam}
In this subsection, we utilize the algorithm from 
Section~\ref{BaseAlgTRSlargeDiam} to achieve a 
$(1+\epsilon)$-approximation ratio when we allow 
translations, rotations, and scaling. Again, given two subsets of $O$, $P$ and $B$, with $|P| = m$ and $|B| = n$, our goal is to 
minimize $h_i(E(P), B)$ over all compositions $E$ of one or more functions in 
$\mathcal{T} \cup \mathcal{R} \cup \mathcal{S}$.
We perform the following steps.
\begin{enumerate}
\item Run BaseTranslateRotateScaleLarge($P,B$),
from Section~\ref{BaseAlgTRSlargeDiam},
to obtain an approximation $h_{apr} \leq A\cdot h_{\textrm{opt}}$.
\item For every $b \in B$, generate the point set $G_b = G(b, \frac{h_{apr}\epsilon}{A^2-A}, \lceil\frac{A^2-A}{\epsilon}\rceil)$ for $h_1$ or $G_b = G(b, \frac{\sqrt{2}h_{apr}\epsilon}{A^2-A}, \lceil\frac{A^2-A}{\sqrt{2}\epsilon}\rceil)$ for $h_2$.
Let $B'$ denote the resulting set.
\item Run BaseTranslateRotateScaleLarge($P,B'$),
from Section~\ref{BaseAlgTRSlargeDiam},
but use the set $B$ for the nearest-neighbor queries.

\end{enumerate}

This algorithm uses the base algorithm to give us 
an indication of what the optimal solution might be. 
We then use this approximation to generate a larger set of points from which to derive transformations to test. 
We next use this point set in the base algorithm when deciding which transformations to iterate over, while still using $B$ to compute nearest neighbors.
The running time is 
$O(A^8n^2m\log n)$, which is
$O(n^2m\log n)$ for constant $A$.

\begin{theorem}
\label{thm:EpsAlgTRSlargeDiam}
Let $h_{\textrm{opt}}$ be $h_i(E(P), B)$ where $E$ is the composition of functions in 
$\mathcal{T} \cup \mathcal{R} \cup \mathcal{S}$ that attains the minimum of $h_i$. The algorithm above runs in time 
$O(A^8n^2m\log n)$ and produces an approximation to $h_{\textrm{opt}}$ that is at most $(1+\epsilon)h_{\textrm{opt}}$ for
both $h_1$ and $h_2$.
\end{theorem}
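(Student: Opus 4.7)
The plan is to follow the same template used in the proofs of Theorem~\ref{thm:EpsAlgT} and Theorem~\ref{thm:EpsAlgTRlargeDiam}, extending it to account for the additional scaling step. First, I would appeal to Theorem~\ref{thm:BaseAlgTRSlargeDiam} to get that step 1 yields $h_{apr} \leq A \cdot h_{\textrm{opt}}$, with $A$ constant because we are in the large-diameter regime. This is essentially a free invocation of the previous result.

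Next, I would fix an optimal composition $E \in \mathcal{T}\cup\mathcal{R}\cup\mathcal{S}$ attaining $h_{\textrm{opt}}$, let $P^* = E(P)$, and focus on the diametric pattern points $p, q$ chosen by the algorithm. Let $b_p, b_q \in B$ be the nearest background points to their images in $P^*$, so both distances are at most $h_{\textrm{opt}} \leq h_{apr}$. The squares enclosing $G_{b_p}$ and $G_{b_q}$ have side $2h_{apr}$, so each contains the corresponding image of $p$ or $q$ under $E$. Applying Lemma~\ref{lem:cube} to each grid then guarantees a grid point within $\epsilon h_{apr}/(A^2-A)$ (or the analogous $L_2$ quantity) of the optimal position of $p$, and similarly for $q$.

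Then I would observe that when BaseTranslateRotateScaleLarge is re-run on $B'$ in step 3, it exhaustively iterates over all pairs from $B'$, so in particular it tests the pair formed by these two near-optimal grid points. The transformation determined by this pair is the composition of a translation, rotation, and scaling whose two pin anchors each deviate from the optimal positions by at most $\epsilon h_{apr}/(A^2-A)$ rather than $h_{\textrm{opt}}$. I would then invoke the same error-propagation argument that produced the $A-1$ factor inside the proof of Theorem~\ref{thm:BaseAlgTRSlargeDiam}: the translation contributes displacement equal to the anchor error (Lemma~\ref{lem:tran-dist}), the rotation contributes a further displacement linearly controlled by the second anchor's error (Lemma~\ref{lem:rot-dist}), and the scaling contributes at most the length of the trapezoid leg $qq'$, which is again linear in the second anchor's error. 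Since all three contributions are linear in the pin-anchor error, replacing $h_{\textrm{opt}}$ by $\epsilon h_{apr}/(A^2-A)$ shrinks the total displacement from $(A-1)h_{\textrm{opt}}$ to $(A-1)\cdot\epsilon h_{apr}/(A^2-A) = \epsilon h_{apr}/A \leq \epsilon h_{\textrm{opt}}$. Adding this to the original $h_{\textrm{opt}}$ gap between each $P^*$ point and its nearest background point gives the desired $(1+\epsilon)h_{\textrm{opt}}$ bound. The running time follows immediately, since $|B'| = O(A^4 n)$ and one more call of the base algorithm on a set of this size costs $O(A^8 n^2 m \log n)$.

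The main obstacle I anticipate is justifying the linearity of the error propagation through the scaling step, since scaling can amplify displacements in proportion to the distance from the scaling center. What saves the argument is that the pair of grid points we pick lies inside discs of radius much smaller than $D$ around the true optimal positions, so the induced scale factor deviates from the optimal one by only an $O(1/D)$ fraction; revisiting the isosceles-trapezoid bound from the proof of Theorem~\ref{thm:BaseAlgTRSlargeDiam} with this smaller pin-anchor error still yields a bound that is a constant (depending on $D$, but bounded since $D$ is large) times the anchor error. The remainder of the proof, in particular the degenerate case $b_p = b_q$, can be handled exactly as in Theorem~\ref{thm:BaseAlgTRSlargeDiam}, and I would relegate it to a sentence at the end rather than repeat the trapezoid analysis in full.
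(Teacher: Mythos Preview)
Your proposal is correct and follows essentially the same template as the paper's proof: invoke the base algorithm to get $h_{apr}\le A\,h_{\textrm{opt}}$, use Lemma~\ref{lem:cube} to find grid points within $\epsilon h_{apr}/(A^2-A)$ of the optimal pin positions, and then appeal to the linearity of the displacement analysis in Theorem~\ref{thm:BaseAlgTRSlargeDiam} to conclude that the resulting transformation moves each pattern point at most $(A-1)\cdot\epsilon h_{apr}/(A^2-A)=\epsilon h_{apr}/A\le\epsilon h_{\textrm{opt}}$ from its optimal location. The paper's own proof is in fact terser than yours---it does not re-examine the scaling step or the trapezoid argument separately, but simply asserts the same $(A-1)$-scaling of the error that you spell out---so your more explicit treatment of the scaling contribution and the degenerate case is a reasonable elaboration rather than a departure.
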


\begin{proof}
Let $E$ be the composition of functions in $\mathcal{T} \cup \mathcal{R} \cup \mathcal{S}$ that attains the minimum of $h_i(E(P), B)$. 
Let $P'$ be $E(P)$. Then every point $q \in P'$ is at most $h_{\textrm{opt}}$ from the closest 
background point in $B$.
By running the base algorithm, we find $h_{apr}\leq Ah_{\textrm{opt}}$ where $A$ is the approximation ratio of the base algorithm. 
Now consider the point $b'\in B$ which is the closest background to some pattern point $p \in P$. The square which encompasses $G_{b'}$ has a side length of $2h_{apr}$. This guarantees that $p$, which is at most $h_{\textrm{opt}}$ away from $b'$, lies within this square. As we saw from Lemma~\ref{lem:cube}, this means that $p$ is at most $\frac{\epsilon h_{apr}}{A^2-A}$ away from its nearest neighbor in $G_{b'}$. Thus, if a transformation defined by the nearest points in $B$ would move our pattern points at most $(A-1)h_{\textrm{opt}}$ from their optimal position, then using the nearest points in $G_{b'}$ to define our transformation will move our points at most
\[
(A-1)\frac{\epsilon h_{apr}}{A^2-A} = \frac{\epsilon h_{apr}}{A} \leq \epsilon h_{\textrm{opt}}.
\]
Thus, the modified algorithm gives a solution that is at most $(1+\epsilon)h_{\textrm{opt}}$. 
\end{proof}

\subsection{Base Algorithm Under Translation, Rotation and Scaling with Small Diameter}
\label{BaseAlgTRSsmallDiam}
In this subsection, we present an alternative algorithm for solving the approximate oriented point-set pattern matching problem where we allow translations, rotations and scaling. This algorithm is an extension of the algorithm from Section~\ref{BaseAlgTRsmallDiam} and similarly provides a good approximation ratio when the diameter of our pattern set is small. 
Once again, given two subsets of $O$, $P$ and $B$, with $|P| = m$ and $|B| = n$, we wish to minimize $h_i(E(P), B)$ over all compositions $E$ of one or more functions in 
$\mathcal{T} \cup \mathcal{R}$. 
We perform the following algorithm:

\begin{center}
\rule{\columnwidth}{2pt}
\textbf{Algorithm} BaseTranslateRotateSmall($P,B$):
 \vspace*{-10pt}
\begin{algorithmic}
\STATE 
Find $p$ and $q$ in $P$ having the maximum value of $\|(x_p, y_p) - (x_q, y_q)\|_2$.
\FOR {every point $b \in B$}
\STATE \emph{$1^{\rm st}$ Pin:} 
Apply the translation, $T_v\in {\cal T}$, that takes $p$ to $b$,
and then
apply the rotation,
$R_{p,\theta}$, that makes $p$, $b$ have the same orientation.
\STATE 
Let $P'$ denote the transformed pattern set, $P$.
\FOR {each point $p$ in $P'$ and each $b' \in B$}
\STATE \emph{$2^{\rm nd}$ pin:} 
Apply the scaling, $S_{p,s}$, 
so that $\|(x_p, y_p) - (x_q, y_q)\|_2 = \|(x_b, y_b) - (x_{b'}, y_{b'})\|_2$%
\STATE
Let $P''$ denote the transformed pattern set.
\FOR {every $q \in P''$}
\STATE \emph{Query step:}
Find a nearest-neighbor of $q$ in $B$ using the $\mu_i$ metric, and update
a candidate Hausdorff distance accordingly.
\ENDFOR
\ENDFOR
\STATE \textbf{return} the smallest candidate Hausdorff distance
found as the smallest Hausdorff distance, $h_i(S_{p,s}(R_{p,\theta}(T_v(P))), B)$.
\ENDFOR 
\end{algorithmic}
\vspace*{-4pt}
\rule{\columnwidth}{2pt}
\end{center}

This algorithm extends the algorithm from Section~\ref{BaseAlgTRsmallDiam} by scaling the point set for so that $p$, $q$, and $b'$ form the vertices of an isosceles triangle. This requires a factor of $n$ more transformations to be computed.
Thus, the running time of this algorithm is $O(n^2m\log n)$.

\begin{figure}[hbt]
\centering
\includegraphics[width=.94\linewidth]{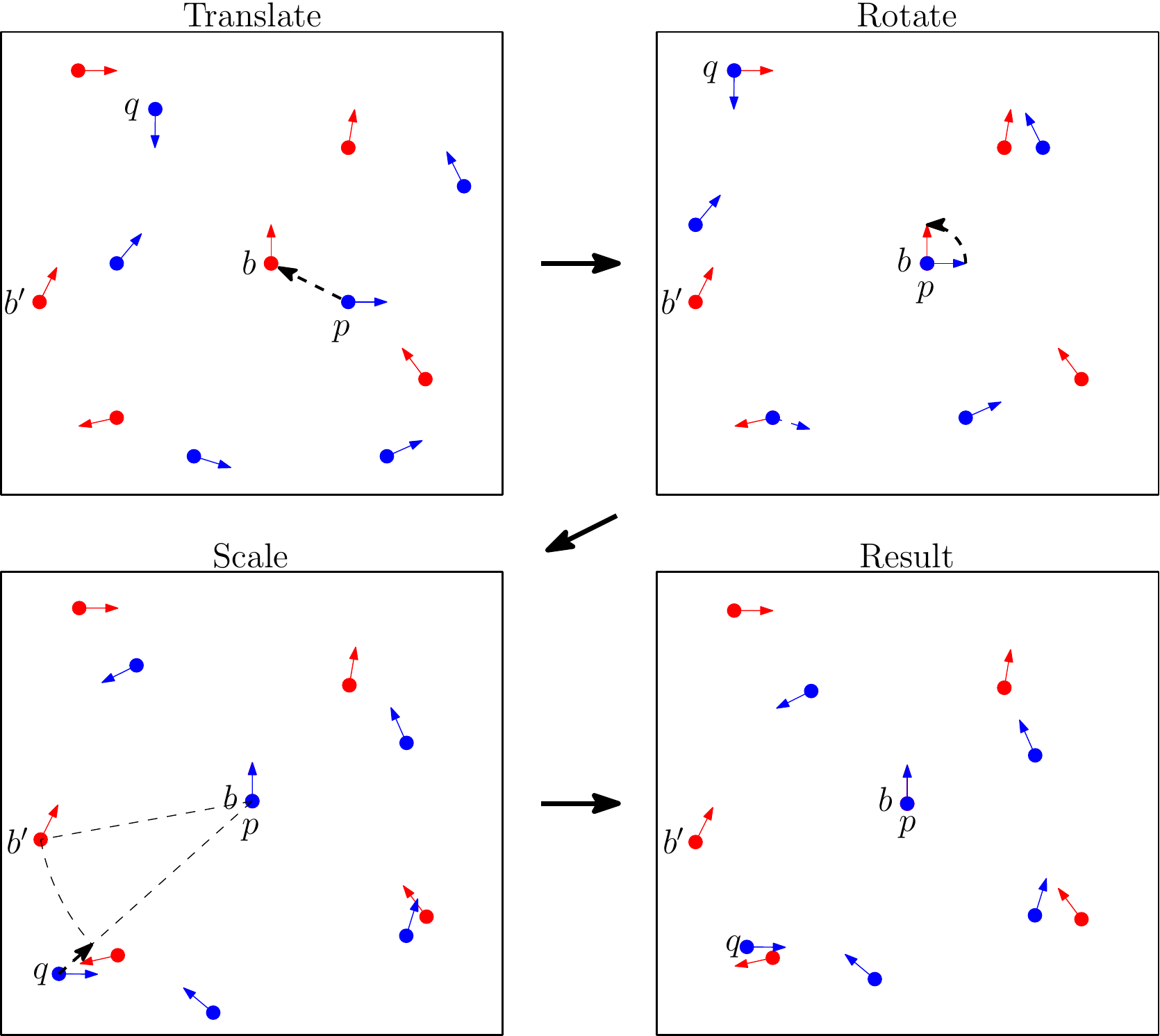}
\caption{Illustration of the translation, rotation and scaling steps of the base
approximation algorithm for translation, rotation and scaling in $O$ when diameter is
small.}
\label{fig:BaseAlgTRSsmallDiam}
\end{figure}

\begin{theorem}
\label{thm:BaseAlgTRSsmallDiam}
Let $h_{\textrm{opt}}$ be $h_i(E(P), B)$ where $E$ is the composition of functions in $
\mathcal{T} \cup \mathcal{R} \cup \mathcal{S}$ that attains the minimum of $h_i$. The 
algorithm above runs in time $O(n^2m\log n)$ and produces an approximation to $h_{\textrm{opt}}$ that is at most $((2 + 2\sqrt{2})(1 + D) + \epsilon)h_{\textrm{opt}}$ for $h_1$ and at most $(4 + 2D + \epsilon)h_{\textrm{opt}}$ for $h_2$.
\end{theorem}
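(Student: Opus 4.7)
The plan is to mirror the backwards-analysis template of Theorem~\ref{thm:BaseAlgTRsmallDiam} and then tack on a scaling displacement analysis reminiscent of the one in Theorem~\ref{thm:BaseAlgTRSlargeDiam}. For the running time, finding the diametric pair $p,q$ costs $O(m\log m)$; the outer loop over $b\in B$ contributes a factor of $n$, the inner loop over $b'\in B$ contributes another factor of $n$, and each of the $O(n^2)$ composite transformations invokes $m$ approximate nearest-neighbor queries at $O(\log n)$ each in the $(1{+}\epsilon/A_i)$-tuned BBD tree built over $B$, giving $O(n^2 m\log n)$. The $\epsilon$ term in the approximation bound is absorbed by that BBD tuning, so it is enough to prove the stated constants under exact queries.

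For the quality bound, I would let $E$ be the optimal composition in $\mathcal{T}\cup\mathcal{R}\cup\mathcal{S}$, set $P'=E(P)$, and for each $p'\in P'$ write $\phi(p')$ for its nearest background point (so $\mu_i(p',\phi(p'))\le h_{\textrm{opt}}$). Letting $p_*,q_*$ denote the optimal positions of the chosen diametric pair, I would single out the algorithm's iteration with $b=\phi(p_*)$ and $b'=\phi(q_*)$. Applied to that iteration, the translation-and-rotation analysis of Theorem~\ref{thm:BaseAlgTRsmallDiam} shows that after the first pin step every pattern point lies within $A_{TR,i}h_{\textrm{opt}}$ of its background neighbor in $\mu_i$, where $A_{TR,1}=2+\sqrt{2}D$ and $A_{TR,2}=2+D$. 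In particular, denoting the resulting positions by $p_{TR}=b$ and $q_{TR}$, one has $\|b-q_{TR}\|_2=D$ and $\|b'-q_{TR}\|_2\le A_{TR,i}h_{\textrm{opt}}$ (Euclidean position distance is dominated by $\mu_i$).

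The scaling step applies $S_{b,s}$ with $s=\|b-b'\|_2/D$. Any point originally at Euclidean distance $d\le D$ from $b$ moves by $|s-1|d$, hence by at most
\[
|s-1|D=\bigl|\|b-b'\|_2-\|b-q_{TR}\|_2\bigr|\le\|b'-q_{TR}\|_2\le A_{TR,i}h_{\textrm{opt}}
\]
via the reverse triangle inequality. Because scaling does not alter angles, the $\mu_2$ displacement equals the Euclidean one, so each point moves by at most $A_{TR,2}h_{\textrm{opt}}=(2+D)h_{\textrm{opt}}$; for $\mu_1$, the $L_1$ position displacement is at most $\sqrt{2}$ times the Euclidean displacement, giving at most $\sqrt{2}A_{TR,1}h_{\textrm{opt}}$. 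Adding the pre-existing $h_{\textrm{opt}}$ gap, the TR displacement bound, and this scaling displacement, and simplifying, yields $(4+2D)h_{\textrm{opt}}$ for $h_2$ and the stated $(2+2\sqrt{2})(1+D)h_{\textrm{opt}}$ for $h_1$. Since this is the $\mu_i$-distance to a valid background point for every pattern point in the transformation tested by the algorithm, the algorithm's returned value is no worse.

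The main obstacle is the $\mu_1$ bookkeeping: the $\sqrt{2}$ factor from converting the Euclidean scaling displacement to $L_1$ compounds with the $\sqrt{2}D$ inside $A_{TR,1}$, and one must absorb lower-order cross terms via a loose triangle inequality to reach the clean $(2+2\sqrt{2})(1+D)$ form. A minor side case, to be handled as in Theorem~\ref{thm:BaseAlgTRSlargeDiam}, is the degenerate event $b=b'$: this forces $D\le 2h_{\textrm{opt}}$, and under our small-diameter assumption the resulting translation-only bound is already strictly better than the quoted constants, so no separate analysis is needed.
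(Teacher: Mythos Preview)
Your proposal is correct and follows essentially the same backwards-analysis template as the paper: invoke the small-diameter TR bound to get every point within $A_{TR,i}h_{\textrm{opt}}$ after the first pin, then bound the scaling displacement of the diametric point $q$ by $\|q_{TR}-q'\|_2\le A_{TR,i}h_{\textrm{opt}}$ and propagate (with the $\sqrt{2}$ factor for $\mu_1$) to all other points. The only cosmetic difference is that where you invoke the reverse triangle inequality $|\,\|b-b'\|_2-\|b-q_{TR}\|_2\,|\le\|b'-q_{TR}\|_2$, the paper phrases the identical bound as a leg-versus-diagonal comparison in an isosceles trapezoid with vertices $q,q',q_s$; both arguments yield $(1+\sqrt{2})A_{TR,1}$ and $2A_{TR,2}$, which are in fact slightly sharper than the stated $(2+2\sqrt{2})(1+D)$ and exactly $4+2D$.
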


\begin{proof}
The additional $\epsilon$ terms come entirely from using approximate nearest neighbor queries, so it is sufficient to prove approximations which do not include the $\epsilon$ term using exact nearest neighbor queries. Particularly, we will prove a bound of $(6 + \sqrt{2}(2 + \pi/D))h_{\textrm{opt}}$ for $h_1$ and a bound of $(4 + \sqrt{2}(2 + \pi/D))h_{\textrm{opt}}$ for $h_2$.

Let $E$ be the composition of functions in $\mathcal{T} \cup \mathcal{R} \cup \mathcal{S}$ 
that attains the minimum of $h_i(E(P), B)$. Let $P'$ be $E(P)$. Because this algorithm is only an extension of the algorithm in Section~\ref{BaseAlgTRsmallDiam} we can follow the same logic as the proof of Theorem~\ref{thm:BaseAlgTRsmallDiam} to see that after the translation and rotation steps, each point  $p \in P'$ is at most $Ah_{\textrm{opt}}$ away from a background point $b \in B$ where $A = 2 + \sqrt{2}D$ for $h_1$ and $A = 2 + D$ for $h_2$. Now we need only look at how much scaling increases the distance our points have moved. 

If $p,q \in P'$ are our diametric points after translation and rotation, and $p',q' \in B$ are the closest background points to the optimal position of $p$ and $q$ respectively, then let us define the point $q_s$ as the position of $q$ after scaling. The points $q$, $q'$ and $q_s$ are three vertices of an isosceles trapezoid  where the line segment $qq'$ is a diagonal of the trapezoid and the line segment $q_sq'$ is a base of the trapezoid. The length of the line segment $qq_s$ is equal to the distance that $q$ will move when we scale $P'$. Because $qq_s$ is a leg of the trapezoid, the length of that leg can be no more than the length of the diagonal $qq'$. In the proof of Theorem~\ref{thm:BaseAlgTRsmallDiam}, we showed that $q$ is at most $Ah_{\textrm{opt}}$ away from $q'$ so this implies that the distance $q$ moves from scaling is at most $Ah_{\textrm{opt}}$.

Point $q$ is the farthest point away from the point $p$ which is the center of our scaling. Thus, no point moves farther as a result of the scaling than $q$ does, with respect to $\mu_2$. For $\mu_1$ it is possible that, if $q$ moved a distance $d$, another point could have moved up to a distance $\sqrt{2}d$. Thus we find that after scaling, any point in $P'$ is at most $(1+\sqrt{2})Ah_{\textrm{opt}}$ and $2Ah_{\textrm{opt}}$ from its nearest background point for $\mu_1$ and $\mu_2$ respectively. Because this is a transformation that the algorithm checks, we are guaranteed at least this solution.
\end{proof}

\subsection{A $(1+\epsilon)$-Approximation Algorithm Under Translation, Rotation and Scaling with Small Diameter}
\label{EpsAlgTRSsmallDiam}
In this subsection, we utilize the algorithm from 
Section~\ref{BaseAlgTRSsmallDiam} to achieve a $(1+\epsilon)$-approximation ratio 
when we allow translations, rotations, and scalings. 
Again, given two subsets of $O$, $P$ and $B$, with $|P| = m$ and $|B| = n$, our goal is to 
minimize $h_i(E(P), B)$ over all compositions $E$ of one or more functions in 
$\mathcal{T} \cup \mathcal{R} \cup \mathcal{S}$.
We perform the following steps.
\begin{enumerate}
\item Run BaseTranslateRotateScaleSmall($P,B$),
from Section~\ref{BaseAlgTRSsmallDiam} to obtain an approximation 
$h_{apr} \leq A\cdot h_{\textrm{opt}}$.
\item For every $b \in B$, generate the point set $G_b = G(b, \frac{h_{apr}\epsilon}{2(A^2-A)}, \lceil\frac{2(A^2-A)}{\epsilon}\rceil)$ for $h_1$ or $G_b = G(b, \frac{h_{apr}\epsilon}{A^2-A}, \lceil\frac{A^2-A}{\epsilon}\rceil)$ for $h_2$.
Let $B' = \bigcup_{b \in B}G_b$ denote the resulting set of points.
\item For every $b' \in B'$, generate the point set $C_{b'} = C(b', \frac{2(A^2-A)}{\pi h_{apr}\epsilon})$ for $h_1$ or $C_{b'} = C(b', \frac{\sqrt{2}(A^2-A)}{\pi h_{apr}\epsilon})$ for $h_2$.
Let $B''$ denote the resulting set of points.
\item Run BaseTranslateRotateScaleSmall($P,B''$),
but use the points in $B$ for nearest-neighbor queries.

\end{enumerate}

This algorithm uses the 
base algorithm to give us an indication of what the optimal solution might be. 
We use this approximation to generate a larger set of points from which to derive transformations to test, but this time we also generate a number of different orientations for those points as well. 
We then use this point set in the base algorithm when deciding which transformations to iterate over, while still using $B$ to compute nearest neighbors.
The running time of this algorithm is $O(A^{12}n^2m\log n)$.

\begin{theorem}
\label{thm:EpsAlgTRSsmallDiam}
Let $h_{\textrm{opt}}$ be $h_i(E(P), B)$ where $E$ is the composition of functions in 
$\mathcal{T} \cup \mathcal{R}$ that attains the minimum of $h_i$. The algorithm above runs in time 
$O(A^{12}n^2m\log n)$ and produces an approximation to $h_{\textrm{opt}}$ that is at most $(1+\epsilon)h_{\textrm{opt}}$ for
both $h_1$ and $h_2$.
\end{theorem}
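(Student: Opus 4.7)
The plan is to mirror the structure of the proof of Theorem~\ref{thm:EpsAlgTRsmallDiam}, extended to handle the extra scaling pin used by BaseTranslateRotateScaleSmall. Let $E$ be the optimal composition in $\mathcal{T}\cup\mathcal{R}\cup\mathcal{S}$ and set $P'=E(P)$, so that every point of $P'$ lies within $\mu_i$-distance $h_{\textrm{opt}}$ of its nearest point in $B$. By Step~1 and Theorem~\ref{thm:BaseAlgTRSsmallDiam}, we have $h_{apr}\le A\cdot h_{\textrm{opt}}$, where $A$ is the constant from that theorem.

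Next I would show that for each pattern point $p\in P'$ lying close to some $b\in B$, the refined set $B''$ contains an oriented point $c$ with $\mu_i(p,c)\le \epsilon h_{apr}/(A^2-A)$. This is a direct combination of Lemma~\ref{lem:cube}, which controls the positional error introduced by the grid $G_b$ of Step~2, with the $\pi/k$ orientation gap of $C_{b'}$ from Step~3; the constants in Steps~2 and~3 are chosen precisely so that the positional and angular contributions sum to at most $\epsilon h_{apr}/(A^2-A)$ under either $\mu_1$ or $\mu_2$, exactly as in the proof of Theorem~\ref{thm:EpsAlgTRsmallDiam}. The same guarantee applies to the second pin point used in the scaling step, since that pin's image in $B''$ only needs to approximate the optimal scale endpoint positionally (its orientation is irrelevant to $S_{p,s}$).

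I would then replay the displacement analysis of Theorem~\ref{thm:BaseAlgTRSsmallDiam} with these refined pins. In that proof, substituting the two closest background pins into BaseTranslateRotateScaleSmall moves each pattern point away from its position in $P'$ by at most $(A-1)h_{\textrm{opt}}$, through a chain of translate, rotate, and scale operations whose contributions are linear in the pin-displacement. Replacing the pin-displacement $h_{\textrm{opt}}$ with the much smaller quantity $\epsilon h_{apr}/(A^2-A)$ therefore yields a pattern-point displacement of at most
\[
(A-1)\cdot\frac{\epsilon h_{apr}}{A^2-A}=\frac{\epsilon h_{apr}}{A}\le \epsilon h_{\textrm{opt}}.
\]
Combined with the initial gap of $h_{\textrm{opt}}$ between $P'$ and $B$, the transformation returned by the algorithm is within $(1+\epsilon)h_{\textrm{opt}}$ of optimal.

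The step I expect to be the main obstacle is verifying the linear scaling of pin-perturbations through the scaling operation $S_{p,s}$: unlike $T_v$ and $R_{p,\theta}$, the displacement it induces on a pattern point at Euclidean distance $r$ from $p$ grows like $|s-1|r$, and can in principle be amplified by a factor as large as $D$. The key observation is that this amplification has already been absorbed into the constant $A$ of Theorem~\ref{thm:BaseAlgTRSsmallDiam} via the isosceles-trapezoid argument there, so re-running that argument with the second pin perturbed by $\epsilon h_{apr}/(A^2-A)$ instead of $h_{\textrm{opt}}$ reproduces the same linear bound on the change in $s$, and hence on the displacement of every pattern point. For the running time, the positional grid $G$ contributes a factor of $O(A^4)$ per background point and the orientation grid $C$ an additional factor of $O(A^2)$, so $|B''|=O(A^6 n)$; running BaseTranslateRotateScaleSmall on $B''$ then takes $O(|B''|^2 m\log n)=O(A^{12}n^2 m\log n)$ time, matching the claimed bound.
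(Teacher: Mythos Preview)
Your proposal is correct and follows essentially the same approach as the paper: obtain $h_{apr}\le A\,h_{\textrm{opt}}$ from the base algorithm, use Lemma~\ref{lem:cube} together with the orientation grid $C$ to exhibit a pin in $B''$ within $\mu_i$-distance $\epsilon h_{apr}/(A^2-A)$ of the optimal pin, and then rescale the displacement bound of Theorem~\ref{thm:BaseAlgTRSsmallDiam} linearly in the pin perturbation to get $(1+\epsilon)h_{\textrm{opt}}$. Your extra paragraph anticipating the amplification of the scaling pin through $S_{p,s}$ is more explicit than the paper's treatment (which simply invokes the $(A-1)$ factor without revisiting the trapezoid argument), but the underlying reasoning and the running-time accounting match.
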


\begin{proof}
Let $E$ be the composition of functions in $\mathcal{T} \cup \mathcal{R} \cup \mathcal{S}$ that attains the minimum of $h_i(E(P), B)$. 
Let $P'$ be $E(P)$. Then every point $q \in P'$ is at most $h_{\textrm{opt}}$ from the closest 
background point in $B$.
By running the base algorithm, we find $h_{apr}\leq Ah_{\textrm{opt}}$ where $A$ is the approximation ratio of the base algorithm. Now consider the point $b'\in B$ which is the closest background to some pattern point $p \in P$. The square which encompasses $G_{b'}$ has a side length of $2h_{apr}$. This guarantees that $p$, which is at most $h_{\textrm{opt}}$ away from $b'$, lies within this square. As we saw from Lemma~\ref{lem:cube}, this means that $p$ is at most $\frac{\epsilon h_{apr}}{2(A^2-A)}$ away from its nearest neighbor $g$ in $G_{b'}$ with respect to the $L1$-norm, and at most $\frac{\epsilon h_{apr}}{\sqrt{2}(A^2-A)}$ with respect to the $L2$-norm. For this point $g$, there are a number of points in $C_g$ which are at the same position but with different orientation. For some point $c$ in $C_g$, the orientation of point $p$ is within an angle of at most $\frac{h_{apr}\epsilon}{2(A^2-A)}$ for $h_1$ and at most $\frac{h_{apr}\epsilon}{\sqrt{2}(A^2-A)}$ for $h_2$. If we combine together the maximum difference in position between $p$ and $c$, and the maximum difference in orientation between $p$ and $c$, then we see that for both $\mu_1$ and $\mu_2$, the distance between $p$ and $c$ is at most $\frac{h_{apr}\epsilon}{A^2-A}$. As we explain at the beginning of this section, if a transformation defined by the nearest points in $B$ would move our pattern points at most $(A-1)h_{\textrm{opt}}$ from their optimal position, then using the nearest points in $C_{g}$ to define our transformation will move our points at most $(A-1)\frac{\epsilon h_{apr}}{A^2-A} = \frac{\epsilon h_{apr}}{A} \leq \epsilon h_{\textrm{opt}}$. Thus the modified algorithm gives a solution that is at most $(1+\epsilon)h_{\textrm{opt}}$. 
\end{proof}

As with our methods for translation and rotation, we can compute in advance
whether we should run our algorithm for large diameter point sets or our
algorithm for small diameter point sets. For $h_1$, we compare the expressions $6 + \sqrt{2}(2 + \pi/D)$ and $(2+2
\sqrt{2})(1+D)$, and we find that the two expressions are equal at $D^* \approx 1.46$. For $h_2$, we compare $4 + \sqrt{2}(2 + \pi/D)$ and $4+2D$ to find that they are equal at $D^*  \approx 2.36$. Using $D^*$ as the deciding value allows us to then find
a transformation in $\mathcal{T}\cup\mathcal{R}\cup \mathcal{S}$ that achieves
a $(1+\epsilon)$-approximation, for any constant $\epsilon>0$,
in $O(n^2m\log n)$ time.

\section{Experiments}

In reporting the results of our experiements, we use the following labels for the algorithms: 
\begin{itemize}
\item
\emph{GR}: the non-oriented translation and rotation algorithm from Goodrich {\it et al.}~\cite{goodrich1999approximate},
\item
\emph{LD}$_{h_1/h_2}$: the base version of the large diameter algorithm using either the $h_1$ or $h_2$ distance metric, 
\item
\emph{SD}$_{h_1/h_2}$: the base version of the small diameter algorithm using either the $h_1$ or $h_2$ distance metric. 
\end{itemize}

These algorithms were implemented in C++ (g++ version 4.8.5) and run on a Quad-core Intel Xeon 3.0GHz CPU E5450 with 32GB of RAM on 64-bit CentOS Linux 6.6. 

\subsection{Accuracy Comparison} We tested the ability of each algorithm to identify the orginal point set after it had been slightly perturbed. From set of randomly generated oriented background point sets, one was selected and a random subset of the points in the set were shifted and rotated by a small amount. Each algorithm was used to match this modified pattern against each of the background point sets and it was considered a success if the background set from which the pattern was derived had the smallest distance (as determined by each algorithm's distance metric). Figure~\ref{fig:accuracy} shows the results of this experiment under two variables: the number of background sets from which the algorithms could choose, and the size of the background sets. Each data point is the percentage of successes across 1000 different pattern sets.

\begin{figure}[hbt]
\centering
\includegraphics[width=\linewidth]{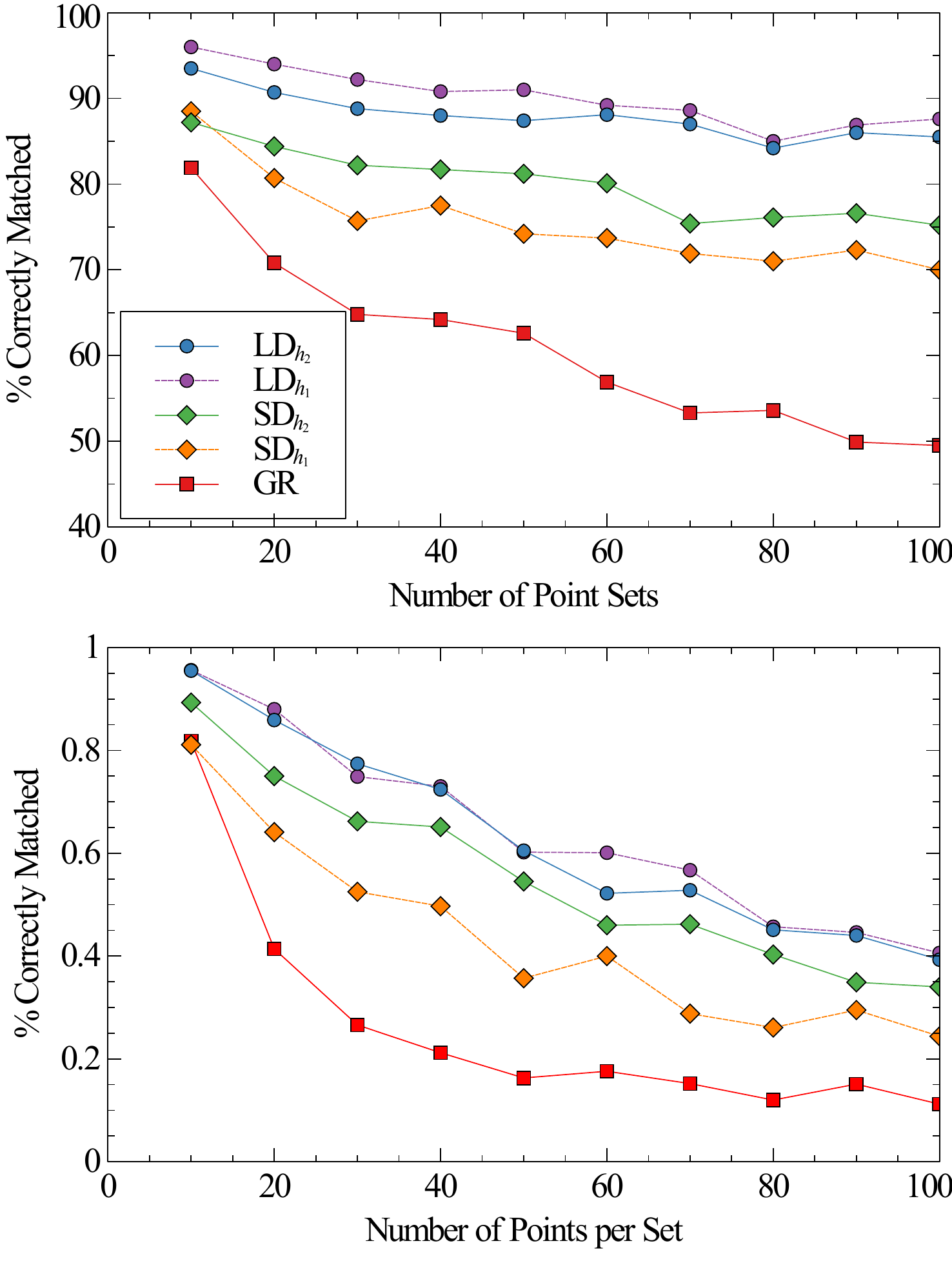}
\caption{Results of Accuracy Comparison}
\label{fig:accuracy}
\end{figure}

In every case, the oriented algorithms are more successful at identifying the origin of the pattern than GR. LD  was also more successful for each distance metric than SD.

\subsection{Performance Comparison} We also compared the performance of the LD and SD algorithms against GR as we increased the pattern size and the background size. The most significant impact of increasing the background size is that the number of nearest neighbor queries increase, and thus the performance in this case is dictated by quality of the nearest neighbor data structure used. Therefore in Figure~\ref{fig:nnq} we use the number of nearest neighbor queries as the basis for comparing performance. As the FD and GR algorithms only differ in how the nearest neighbor is calculated, they both perform the same number of queries while the SD algorithm performs significantly fewer nearest neighbor queries.

For pattern size, we compared running time and the results are shown in Figure~\ref{fig:time}. In this case, LD is slower than GR, while SD is signifcantly faster than either of the others. 

   \begin{figure}[hbt]
\centering
\includegraphics[width=\linewidth]{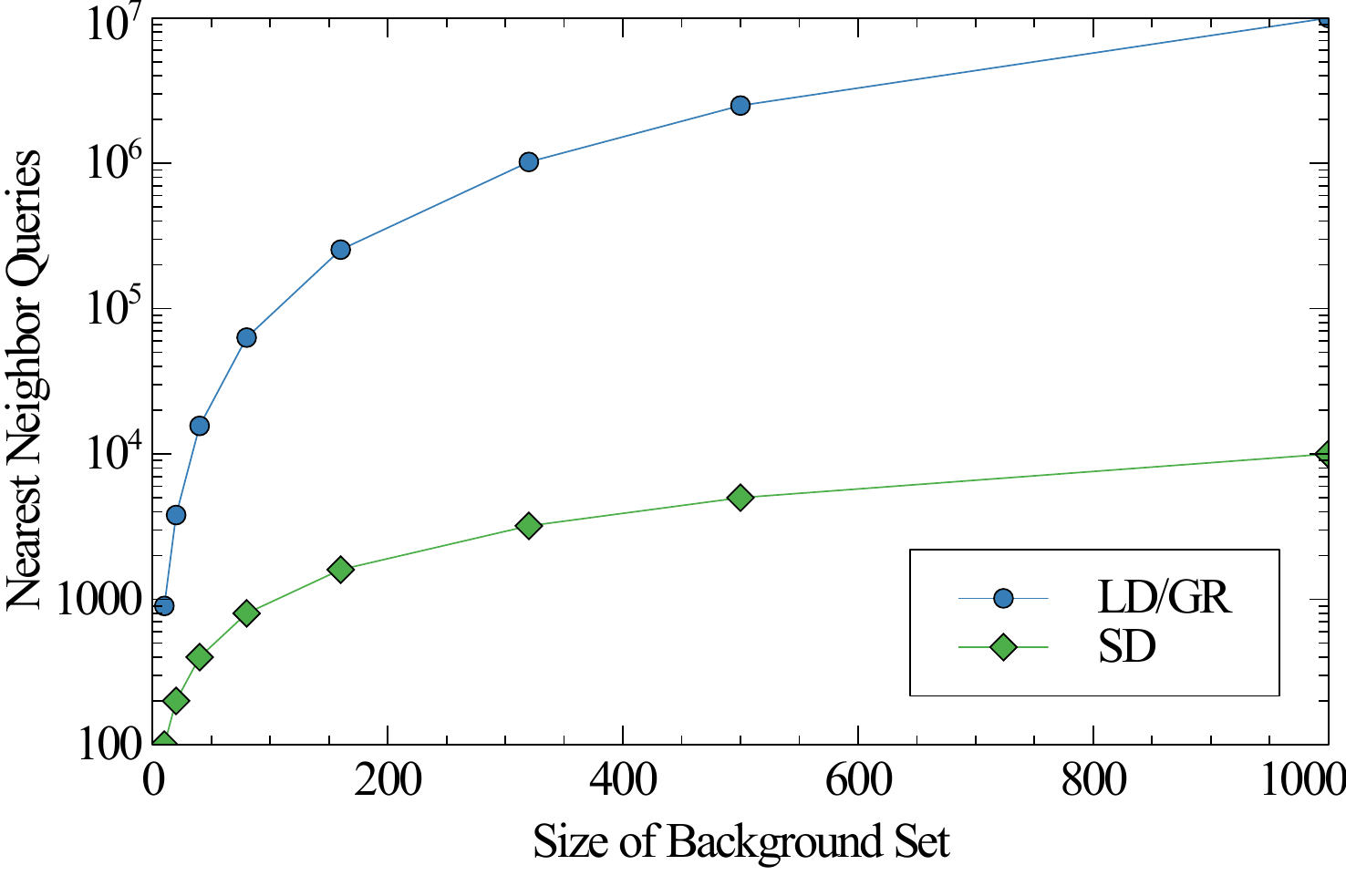}
\caption{Comparison of nearest neighbor queries as function of background size}
\label{fig:nnq}
\end{figure}

\begin{figure}[hbt]
\centering
\includegraphics[width=\linewidth]{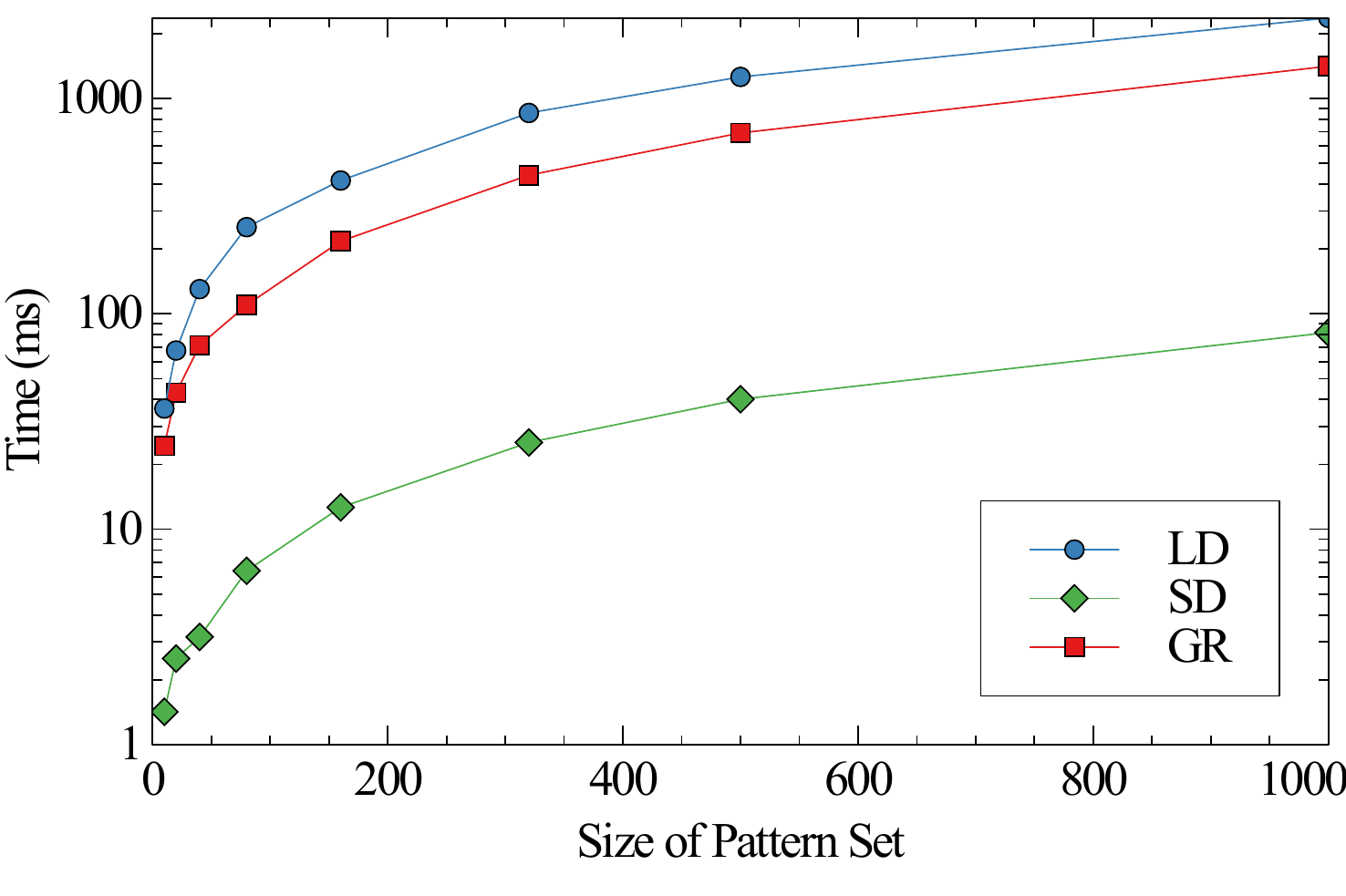}
\caption{Comparison of running time as a function of pattern size}
\label{fig:time}
\end{figure}
%%%%%%%%%%%%%%%%%%%%%%%%%%%%%%%%%%%%%%%%%%%%%%%%%%%%%%%%%%%%%%%%%%%%%%%%%%%%%%%%

\section{Conclusion} 
We present distance metrics that can be used to measure the 
similarity between two point sets with orientations and we also provided fast 
algorithms that guarantee close approximations of an optimal 
transformation.
In the appendices,
we provide additional algorithms for 
other types of transformations and we also provide
results of experiments.

%%%%%%%%%%%%%%%%%%%%%%%%%%%%%%%%%%%%%%%%%%%%%%%%%%%%%%%%%%%%%%%%%%%%%%%%%%%%%%%

\section*{Acknowledgments}

This work was supported in by the NSF under grants 1526631, 1618301, and 1616248, and by DARPA under agreement no. AFRL FA8750-15-2-0092. The views expressed are those of the authors and do not reflect the official policy or position of the Department of Defense or the U.S. Government.

\clearpage
%%%%%%%%%%%%%%%%%%%%%%%%%%%%%%%%%%%%%%%%%%%%%%%%%%%%%%%%%%%%%%%%%%%%%%%%%%%%%%%%
\bibliographystyle{abbrv}

\bibliography{point-set-pattern-matching-archive}

\begin{thebibliography}{10}

\bibitem{alt1999discrete}
H.~Alt and L.~J. Guibas.
\newblock Discrete geometric shapes: Matching, interpolation, and
  approximation.
\newblock {\em Handbook of computational geometry}, 1:121--153, 1999.

\bibitem{arya1998optimal}
S.~Arya, D.~M. Mount, N.~S. Netanyahu, R.~Silverman, and A.~Y. Wu.
\newblock An optimal algorithm for approximate nearest neighbor searching fixed
  dimensions.
\newblock {\em Journal of the ACM (JACM)}, 45(6):891--923, 1998.

\bibitem{cardoze1998pattern}
D.~E. Cardoze and L.~J. Schulman.
\newblock Pattern matching for spatial point sets.
\newblock In {\em Foundations of Computer Science, 1998. Proceedings. 39th
  Annual Symposium on}, pages 156--165. IEEE, 1998.

\bibitem{chew1997geometric}
L.~P. Chew, M.~T. Goodrich, D.~P. Huttenlocher, K.~Kedem, J.~M. Kleinberg, and
  D.~Kravets.
\newblock Geometric pattern matching under euclidean motion.
\newblock {\em Computational Geometry}, 7(1):113--124, 1997.

\bibitem{cho2008improved}
M.~Cho and D.~M. Mount.
\newblock Improved approximation bounds for planar point pattern matching.
\newblock {\em Algorithmica}, 50(2):175--207, 2008.

\bibitem{gavrilov1999geometric}
M.~Gavrilov, P.~Indyk, R.~Motwani, and S.~Venkatasubramanian.
\newblock Geometric pattern matching: A performance study.
\newblock In {\em Proceedings of the fifteenth annual symposium on
  Computational geometry}, pages 79--85. ACM, 1999.

\bibitem{goodrich1999approximate}
M.~T. Goodrich, J.~S. Mitchell, and M.~W. Orletsky.
\newblock Approximate geometric pattern matching under rigid motions.
\newblock {\em IEEE Transactions on Pattern Analysis and Machine Intelligence},
  21(4):371--379, 1999.

\bibitem{indyk1999geometric}
P.~Indyk, R.~Motwani, and S.~Venkatasubramanian.
\newblock Geometric matching under noise: Combinatorial bounds and algorithms.
\newblock In {\em SODA}, pages 457--465, 1999.

\bibitem{jain1997identity}
A.~K. Jain, L.~Hong, S.~Pankanti, and R.~Bolle.
\newblock An identity-authentication system using fingerprints.
\newblock {\em Proceedings of the IEEE}, 85(9):1365--1388, 1997.

\bibitem{JEA20051672}
T.-Y. Jea and V.~Govindaraju.
\newblock A minutia-based partial fingerprint recognition system.
\newblock {\em Pattern Recognition}, 38(10):1672--1684, 2005.

\bibitem{jiang906252}
X.~Jiang and W.-Y. Yau.
\newblock Fingerprint minutiae matching based on the local and global
  structures.
\newblock In {\em Proceedings 15th International Conference on Pattern
  Recognition. ICPR-2000}, volume~2, pages 1038--1041, 2000.

\bibitem{kulkarni2006orientation}
J.~V. Kulkarni, B.~D. Patil, and R.~S. Holambe.
\newblock Orientation feature for fingerprint matching.
\newblock {\em Pattern Recognition}, 39(8):1551--1554, 2006.

\bibitem{maltoni2009handbook}
D.~Maltoni, D.~Maio, A.~Jain, and S.~Prabhakar.
\newblock {\em Handbook of Fingerprint Recognition}.
\newblock Springer Science \& Business Media, 2009.

\bibitem{preparatacomputational}
F.~P. Preparata and M.~I. Shamos.
\newblock Computational geometry: an introduction.
\newblock {\em Springer-Verlag, New York, NY}, 1985.

\bibitem{qi2005fingerprint}
J.~Qi, S.~Yang, and Y.~Wang.
\newblock Fingerprint matching combining the global orientation field with
  minutia.
\newblock {\em Pattern Recognition Letters}, 26(15):2424--2430, 2005.

\bibitem{ratha2007automatic}
N.~Ratha and R.~Bolle.
\newblock {\em Automatic Fingerprint Recognition Systems}.
\newblock Springer Science \& Business Media, 2007.

\bibitem{tico2003fingerprint}
M.~Tico and P.~Kuosmanen.
\newblock Fingerprint matching using an orientation-based minutia descriptor.
\newblock {\em IEEE Transactions on Pattern Analysis and Machine Intelligence},
  25(8):1009--1014, 2003.

\bibitem{veltkamp2001shape}
R.~C. Veltkamp.
\newblock Shape matching: similarity measures and algorithms.
\newblock In {\em Shape Modeling and Applications, SMI 2001 International
  Conference on.}, pages 188--197. IEEE, 2001.

\bibitem{Xu09}
H.~Xu, R.~N.~J. Veldhuis, T.~A.~M. Kevenaar, and T.~A. H.~M. Akkermans.
\newblock A fast minutiae-based fingerprint recognition system.
\newblock {\em IEEE Systems Journal}, 3(4):418--427, Dec 2009.

\end{thebibliography}

%%%%%%%%%%%%%%%%%%%%%%%%%%%%%%%%%%%%%%%%%%%%%%%%%%%%%%%%%%%%%%%%%%%%%%%%%%%%%%%

\clearpage

\end{document}